\newcommand*\bigcdot{\mathpalette\bigcdot@{.5}}
\newcommand*\bigcdot@[2]{\mathbin{\vcenter{\hbox{\scalebox{#2}{$\m@th#1\odot$}}}}}
\newtheorem{theorem}{Theorem}
\newtheorem{claim}[theorem]{Claim}
\newtheorem*{theorem*}{Theorem}
\newtheorem{remark}{Remark}
\newtheorem*{claim*}{Claim}
\newtheorem*{remark*}{Remark}
\newtheorem*{lemma*}{Lemma}
\newtheorem{definition}{Definition}
\newtheorem{lemma}[theorem]{Lemma}
\newtheorem{fact}[theorem]{Fact}
\newtheorem*{fact*}{Fact}
\newcommand{\Z}{{\mathbb Z}}
\newcommand{\eps}{\epsilon}
\newcommand{\vspan}{\mathrm{span}}
\newcommand\restr[2]{{
  \left.
  #1 
  \right|_{#2} 
  }}
\newcommand{\F}{\mathbb{F}}
\newcommand{\E}{\mathop \mathbb{E}}
\title{List-Decoding Capacity Implies Capacity on the q-ary Symmetric Channel}
\begin{document}

\author[]{Francisco Pernice}
\address{Massachusetts Institute of Technology}
\email{fpernice@mit.edu}

\author[]{Oscar Sprumont}
\address{School of Computer Science, University of Washington}
\email{osprum@cs.washington.edu}

\author[]{Mary Wootters}
\address{Stanford}
\email{marykw@stanford.edu}

\begin{abstract}
    It is known that the Shannon capacity of the q-ary symmetric channel (qSC) is the same as the list-decoding capacity of an adversarial channel, raising the question of whether there is a formal (and black-box) connection between the two.
    We show that there is: Any linear code $C\subseteq \F_q^n$ that has minimum distance $d_{\min}=\omega\big(q^3\big)$ and achieves list-decoding capacity also achieves capacity on the qSC.
\end{abstract}

\maketitle

\section{Introduction}\label{intro}
A linear code of length $n$ over a finite field $\mathbb{F}_q$ is a $\mathbb{F}_q$-linear subspace $C \subseteq \mathbb{F}_q^n$.  In coding theory, we think of a code $C$ as the set of encodings of possible messages.  If a sender wants to send a message to a receiver over a noisy channel, they choose the corresponding element $c \in C$ (called a \emph{codeword}), and transmit $c$ over the channel.  A receiver sees a noisy version of the codeword, $\tilde{c}$, and must recover $c$. Two primary goals in designing such a code $C$ are (a) low redundancy, meaning that $n$, the length of a codeword $c \in C$, is not too much larger than $\log_q|C|$, the length of a message encoded with the code; and (b) tolerance of as many errors as possible.

The requirement of low redundancy is quantified by the \emph{rate} of the code: The rate $R$ of a code $C \subseteq \mathbb{F}_q^n$ is defined as $R = \frac{\log_q|C|}{n}$. The rate $R$ is always between $0$ and $1$, and the larger it is, the lower the redundancy of the code. 

The requirement on error tolerance depends on the channel model.  In this paper, we focus on two well-studied channel models, both parameterized by $p \in (0,1)$.  The first model is the $q$-ary symmetric channel\footnote{In the qSC$_p$, each $q$-ary symbol is corrupted independently with probability $p$; if a symbol is corrupted, it is replaced by a uniformly random different symbol in $\F_q$.} qSC$_p$. The second model is an adversarial channel that may corrupt up to a $p$-fraction of the symbols sent in a worst-case way.  Thus, in each model, we are concerned with the best possible trade-off between $R$ and~$p$.

In the case of the qSC$_p$, the best trade-off between $R$ and $p$ is well-understood.  The best rate at which reliable communication possible on the qSC$_p$---known as the \emph{Shannon capacity} of the channel---is $R = 1 - h_q(p)$, where 
\[ h_q(p) := (1 - p) \log_q\frac{1}{1-p} + p \log_q \frac{q-1}{p}\]
is the $q$-ary entropy function~\cite{shannon1948entropy}.  A family of codes that approaches this trade-off is said to \emph{achieve capacity} on the qSC$_p$ (see Definition \ref{defncapacityqsc}).

As one would expect, the best possible trade-off between $R$ and $p$ is worse in the adversarial case than in the stochastic case.\footnote{In fact, for small $q$, the best possible trade-off between $R$ and $p$ is still unknown in the adversarial model.}  However, if one relaxes the definition of ``reliable decoding,'' one can do better.  More precisely, we consider the notion of \emph{list-decoding}, a classical notion introduced by Elias and Wozencraft~\cite{elias1957firstlist,wozencraft1958firstlist}.  In list-decoding, the decoder's goal is no longer to return only the transmitted codeword $c \in C$, but rather a short list of possible codewords that is guaranteed to include $c$.  Formally, $C$ is \emph{$(p,L)$-list-decodable} if for all $w \in \F_q^n$, 
\[ |\{ c \in C \,:\, d(w,c) \leq pn \} | \leq L.\]
It is known (see, e.g., \cite{2023bookcodingtheory}, Theorem 7.4.1) that the best
\footnote{In this informal discussion, we have not mentioned the size $L$ of the list the decoder is allowed to output. 
Slightly more formally, the \emph{list-decoding capacity theorem}  states that there are codes of rate $R= 1 - h_q(p) - \eps$ that are $(p,L)$-list-decodable where $L$ depends on $\eps$ but not on $n$; and that conversely any code of rate bounded above $1 - h_q(p)$ cannot be list-decodable with any list size $L$ that is sub-exponential in $n$.  } 
possible trade-off between $R$ and $p$ in the list-decoding setting is also $1 - h_q(p)$, exactly the same as the Shannon capacity of the qSC$_p$!  If a family of list-decodable codes approaches this trade-off, we say that it \emph{achieves list-decoding capacity} (see Definition \ref{defnlistcapacity}).

\vspace{.3cm}

\paragraph{\textbf{Our question.}}
This state of affairs raises a natural question: Since the capacity of the qSC$_p$ is the same as the list-decoding capacity, is there a formal connection between these two notions?  
In the list-decoding literature, it is common to introduce list-decoding as a ``bridge'' between the Shannon and Hamming models, in that list-decoding allows one to reach the Shannon capacity of a channel, even under a worst-case (Hamming) model of errors.\footnote{For example, the chapter on list-decoding in the textbook~\cite{2023bookcodingtheory} is titled ``Bridging the Gap Between Shannon and
Hamming: List Decoding.''} But can this be made formal?

In this paper we show that there is a formal connection, at least in one direction.  That is, we show that \emph{any} list-decodable code $C$ with good enough minimum distance\footnote{The \emph{minimum distance} of a code $C$ is $d_{\min}(C) = \min_{c \neq c' \in C}d(c,c')$, where $d(\cdot,\cdot)$ denotes Hamming distance.} achieves capacity on the qSC$_p$.  (As we note below, the converse is not true.)  In the next section, we describe our results in more detail.

\subsection{Our Results}\label{sec:results}
Our main result is the following theorem.
\begin{theorem}\label{capacitybridge}
Let $p \in (0,1)$.
    Let $\{C_n\subseteq\F_{q}^n\}$ be a family of linear codes that achieves list-decoding capacity on the adversarial channel that introduces a $p$-fraction of corruptions. If $d_{\min}(C_n)=\omega\Big(\frac{q^3}{(1-p)^2}\Big)$, then $\{C_n\}$ achieves capacity on the qSC$_p$.
\end{theorem}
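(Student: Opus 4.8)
The plan is to show that list-decoding plus large minimum distance gives a decoder for the qSC$_p$ that succeeds with high probability. The natural decoder is: on receiving the noisy word $w$, run the list-decoder at radius slightly above $p$ (say $p' = p + \gamma$ for a small $\gamma$ depending on the rate slack $\eps$), obtaining a list $\calL$ of size $L = L(\eps)$, and then output the unique codeword in $\calL$ closest to $w$ (breaking ties arbitrarily, declaring failure if the list is empty or the minimizer is ambiguous). I need to argue that with probability $1 - o(1)$ over the qSC$_p$ noise, (i) the transmitted codeword $c$ lies in $\calL$, and (ii) no other codeword in $\calL$ is as close to $w$ as $c$ is.

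For step (i), the number of errors introduced by the qSC$_p$ is a sum of $n$ i.i.d.\ indicators with mean $pn$, so by a Chernoff bound it exceeds $p'n = (p+\gamma)n$ only with probability $e^{-\Omega(\gamma^2 n)} = o(1)$; hence $d(w,c) \le p'n$ and $c \in \calL$ whp, as long as we chose the list-decoding radius to be $p' $ and the rate is below $1 - h_q(p') $, which we can arrange by taking $\gamma$ small enough relative to $\eps$ (using continuity of $h_q$). For step (ii), fix any codeword $c' \in C$ with $c' \ne c$; I want to show $\P[d(w,c') \le d(w,c)]$ is tiny, then union-bound over the at most $L$ competitors in the list — but the list is random, so more carefully I union-bound over \emph{all} $c' \ne c$ and use that only $L$ of them can be in the list, or I bound $\P[\exists c' \in \calL \setminus \{c\} : d(w,c') \le d(w,c)]$ directly. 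Writing $\delta = d(c,c') \ge d_{\min}$, the event $d(w,c') \le d(w,c)$ requires the noise to move $w$ at least as close to $c'$ as to $c$; since $c$ and $c'$ differ in $\delta$ coordinates, this is controlled by the behavior of the noise on those $\delta$ coordinates only. On each such coordinate the channel either leaves $w$ agreeing with $c$ (prob $1-p$), makes it agree with $c'$ (prob $p/(q-1)$), or agree with neither (prob $p(q-2)/(q-1)$); for $d(w,c')\le d(w,c)$ we essentially need the number of ``moved toward $c'$'' coordinates to be at least the number of ``moved away from $c$'' coordinates among these $\delta$ positions, which is a large-deviation event for a random walk with strong negative drift (since $1-p \gg p/(q-1)$). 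This gives $\P[d(w,c') \le d(w,c)] \le e^{-\Omega(\delta (1-p)^2 / q)}$ or so — the $q^3$ in the hypothesis is exactly what is needed to beat the $\log L$ and the entropy/counting terms.

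The main obstacle, and the step deserving the most care, is step (ii): we cannot simply union-bound over all $q^{Rn}$ codewords, since $\P[d(w,c')\le d(w,c)]$ is only exponentially small in $\delta$, not in $n$, and $\delta$ could be as small as $d_{\min} = \omega(q^3/(1-p)^2)$, which is $o(n)$. The resolution is to use the list-decoding structure: condition on the (random) list $\calL$, which has size $\le L$ independent of $n$; then it suffices to show that for each \emph{fixed} competitor $c'$, the probability it is both in $\calL$ and at least as close as $c$ is small, and sum over the $\le L$ slots. Handling the dependence between ``$c' \in \calL$'' and ``$d(w,c') \le d(w,c)$'' is the delicate point — one clean way is to observe that $c' \in \calL$ implies $d(w,c') \le p'n$, and then bound $\P[d(w,c') \le p'n \text{ and } d(w,c') \le d(w,c)]$; the first event already forces the noise to behave atypically on the $\delta$ coordinates where $c,c'$ differ (it must move $w$ toward $c'$ on a $p'$-ish fraction of coordinates overall, hence use up its error budget in a way incompatible with the $(1-p)$-drift on the differing coordinates), and a careful accounting shows the joint probability is $\exp(-\Omega(\delta(1-p)^2/q))$, which times $L$ is $o(1)$ precisely when $d_{\min} = \omega(q^3/(1-p)^2)$ (one factor of $q$ from $\log|\calL|$-type terms, and the $q^2$ from the per-coordinate probabilities $p/(q-1)$). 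Finally I would assemble: whp $c \in \calL$ and $c$ is the strict closest-to-$w$ codeword in $\calL$, so the decoder outputs $c$, giving vanishing error probability at rate $1 - h_q(p) - \eps$ for every $\eps > 0$, which is exactly achieving capacity on the qSC$_p$ (Definition \ref{defncapacityqsc}).
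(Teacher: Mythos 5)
Your proposal takes a genuinely different route from the paper's. The paper never attempts to show directly that the ML decoder succeeds with probability $1-o(1)$; instead it first shows (Theorem~\ref{mainresultformal}) that at noise level $p-n^{-1/4}$ the ML decoder already succeeds with probability at least $1/(2L)$ --- by comparing it to the decoder that outputs a uniformly random codeword from the radius-$pn$ list --- and then invokes a sharp-threshold theorem (Theorem~\ref{gbound}, built on the $q$-ary Russo Lemma~\ref{russoFq}, the isoperimetric Theorem~\ref{isoperimetry}, and the lower bound $\Delta_\Omega\geq d_{\min}/q-3$ of Lemma~\ref{bounddelta}) to boost $1/(2L)$ up to $1-o(1)$ at a slightly smaller noise level. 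That machinery exists precisely to avoid the step your plan must confront head-on: a union bound over competing codewords.

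And indeed the gap is in that step, the one you yourself flag as the delicate point. You propose to ``condition on the (random) list $\mathcal{L}$ \dots\ and sum over the $\leq L$ slots,'' but the codewords occupying those slots depend on $w$, so this is not a union bound over $L$ terms. The actual union bound is over all $c'\neq c$, namely $\Pr[\exists\,c'\in\mathcal{L}\setminus\{c\}\,:\,d(w,c')\leq d(w,c)]\leq\sum_{c'\neq c}\Pr[c'\in\mathcal{L}\text{ and }d(w,c')\leq d(w,c)]$, a sum with $|C|-1$ terms. To replace it by $L$ times a maximum you would need $\Pr[d(w,c')\leq d(w,c)\mid c'\in\mathcal{L}]$ to be uniformly small (together with $\sum_{c'}\Pr[c'\in\mathcal{L}]=\E[|\mathcal{L}|]\leq L$), but conditioning on $d(w,c')\leq p'n$ biases the noise toward $c'$ and so \emph{increases} the chance that $d(w,c')\leq d(w,c)$; that conditional probability is not uniformly small. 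Nor does the unrestricted union bound close on its own: your per-codeword estimate is only $\exp(-\Omega(\delta(1-p)^2/q))$ with $\delta$ as small as $d_{\min}$, and the hypothesis $d_{\min}=\omega(q^3/(1-p)^2)$ allows $d_{\min}$ to grow arbitrarily slowly with $n$. List-decodability caps the number of competitors at distance $\leq pn$ from $c$ at $L$, but for distances in $(pn,2pn]$ a capacity-achieving list-decodable code can contain exponentially many codewords (they need only be spread out so that no radius-$pn$ ball captures more than $L$), and nothing in your outline controls their aggregate contribution. This is exactly the obstruction the paper's Margulis--Russo--Talagrand argument is designed to sidestep.
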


Theorem~\ref{capacitybridge} follows from  a more general statement about $(p,L)$-list-decodability, which we present in Theorem \ref{mainresultformal}. Next, we state a slightly weaker but more easily digestible version of that result.

\begin{theorem}\label{mainresult}
Let $C\subseteq \F_q^n$ be a linear, $(p,L)$-list-decodable code with minimum distance $d_{\min}(C)\geq \frac{q^3}{(1-p)^2} \ln^4(nL)$ for sufficiently large $n$. Then $C$ can be used for reliable communication on the qSC$_{p'}$, where ${p}' = p-\frac{7}{\ln n}$.
\end{theorem}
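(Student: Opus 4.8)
The plan is to decode by list-decoding followed by a maximum-likelihood choice inside the list. Given the received word $w$, form $\mathcal{L}=\{c'\in C:\ d(w,c')\le pn\}$, which has size at most $L$ by hypothesis, and output the $c'\in\mathcal{L}$ minimising $d(w,c')$ (ties broken arbitrarily); on the qSC this is exactly the ML rule restricted to the list. Let $c\in C$ be the transmitted codeword, $e=w-c$ the channel error, so $e\sim$ qSC$_{p'}$. The decoder fails only if (a) $c\notin\mathcal{L}$, or (b) there is a nonzero codeword $v$ with $c+v\in\mathcal{L}$ and $d(w,c+v)\le d(w,c)$; rephrased in terms of $e$ and $v$, (b) says there is a nonzero $v\in C$ with $\mathrm{wt}(e-v)\le\min(\mathrm{wt}(e),pn)$. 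Call such a $v$ \emph{dangerous}; note every dangerous $v$ has $\mathrm{wt}(v)\le 2\mathrm{wt}(e)$.

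Event (a) is easy: $\mathrm{wt}(e)$ is a sum of $n$ i.i.d.\ Bernoulli$(p')$ variables, and since $p'=p-7/\ln n$, Hoeffding gives $\Pr[\mathrm{wt}(e)>pn]\le e^{-\Omega(n/\ln^2 n)}$ — the whole point of shrinking $p$ to $p'$ is to open a margin of order $n/\ln n$, which swamps the $O(\sqrt n)$ fluctuations of $\mathrm{wt}(e)$. On the complementary good event $\{\mathrm{wt}(e)\le pn\}$ we have $c\in\mathcal{L}$, and ``$v$ dangerous'' simplifies to $\mathrm{wt}(e-v)\le\mathrm{wt}(e)$ (the bound $\le pn$ is then automatic). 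For the per-codeword estimate, fix a nonzero $v$, put $S=\mathrm{supp}(v)$ (so $|S|=\mathrm{wt}(v)\ge d_{\min}$), and observe that $e$ and $e-v$ agree off $S$, so $\mathrm{wt}(e-v)-\mathrm{wt}(e)=\sum_{i\in S}Y_i$ where $Y_i=+1$ if $e_i=0$, $Y_i=-1$ if $e_i=v_i$, and $Y_i=0$ otherwise. The $Y_i$ are i.i.d.\ with mean $\gamma:=1-\tfrac{qp'}{q-1}=\tfrac{q(1-p')-1}{q-1}$, strictly positive in the relevant regime $p<1-1/q$, and the $7/\ln n$ slack keeps $\gamma=\Omega(1/\ln n)$ even when $p$ sits at the threshold. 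A Chernoff bound then gives $\Pr[v\text{ dangerous}]=\Pr[\sum_{i\in S}Y_i\le 0]\le e^{-c\gamma^2\mathrm{wt}(v)}\le e^{-c\gamma^2 d_{\min}}$ for an absolute constant $c$; since $\gamma^2=\Omega\!\big(\tfrac{(1-p)^2}{q^2\ln^2 n}\big)$ and $d_{\min}\ge\tfrac{q^3}{(1-p)^2}\ln^4(nL)$, this is $(nL)^{-\omega(1)}$, with a great deal of room to spare.

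The crux is to union-bound over the dangerous codewords, which is subtle because the relevant set of codewords depends on the noise $e$ and a naive union bound over all of $C$ is far too lossy. I would split on $\mathrm{wt}(v)$. For $d_{\min}\le\mathrm{wt}(v)\le pn$: every such $v$ lies in the Hamming ball of radius $pn$ about the origin, so list-decodability applied at the center $0$ shows there are at most $L$ of them, a bound that does not see the noise; combined with the per-codeword estimate, these contribute $L\cdot(nL)^{-\omega(1)}\ll 1$, and this is exactly where the distance hypothesis earns its keep. The genuinely hard case is $\mathrm{wt}(v)\in(pn,2pn]$ (dangerous $v$ have weight at most $2pn$ on the good event): list-decoding at the origin no longer bounds the number of such codewords, and — since no tail of the form $e^{-\Omega(\mathrm{wt}(v))}$ can beat the $q^{\Omega(n)}$ codewords of such weight — one must again obtain an effective count of the codewords that can actually be dangerous for a typical $e$.

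For that hard case I would exploit two facts in tandem: (i) a weight-$w$ codeword with $w>pn$ can be dangerous only when the noise is heavily atypical on $S=\mathrm{supp}(v)$ (for instance $\mathrm{wt}(e|_S)\ge w/2$, or the sharper fact that $\sum_{i\in S}Y_i\le0$ against a mean $\gamma w\gg1$), an event whose probability decays with $w$; and (ii) list-decodability \emph{at centers other than the origin} — charging each such $v$ to the radius-$pn$ balls containing both $0$ and $v$ (e.g.\ the many vectors obtained by zeroing out $w-pn$ of $v$'s nonzero coordinates) and invoking the list-size bound there to cap the number of medium-weight codewords. The delicate point is to make the counting bound and the concentration bound multiply to $o(1)$ at every one of the $O(\log n)$ dyadic weight scales in $(pn,2pn]$, uniformly in $p$ and $q$; this balancing is what I expect to force the $\ln^4(nL)$ (rather than a single logarithm) and the factor $q^3/(1-p)^2$ in the distance requirement, and reconciling these two estimates across all weight scales and all $p,q$ is, I expect, the main obstacle.
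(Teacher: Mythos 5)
Your decomposition into events (a) (true codeword outside the list) and (b) (a dangerous nonzero $v$ exists), and your treatment of (a) and of low-weight $v$, are fine. But the hard case you flag---$\mathrm{wt}(v)\in(pn,2pn]$---is a genuine gap, not a deferred calculation, and the route you sketch for it does not close. The per-codeword bound $e^{-c\gamma^2\mathrm{wt}(v)}$ is at best $e^{-\Theta(\gamma^2 pn)}$ in this range, while the number of codewords of weight in $(pn,2pn]$ is typically $q^{\Theta(n)}$; the two do not balance, even for constant $\gamma$, and when $p$ is near $1-1/q$ one has $\gamma=\Theta(1/\ln n)$ and the product diverges spectacularly. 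Your proposed remedy---``charge each $v$ to radius-$pn$ balls containing both $0$ and $v$ and invoke the list-size bound there''---would need exponentially many centers to cover the relevant shell, so the list-size bound at those centers does not yield a usable count. More to the point, you have already extracted all the information list-decodability gives for a \emph{fixed} noise $e$: the set of dangerous codewords (including $0$) has size at most $L$, because every dangerous $v$ lies in $B_{pn}(e)$. That observation alone only certifies a success probability of about $1/L$ for the uniform-over-list decoder, hence $\ge 1/(2L)$ for the ML decoder by optimality; it cannot, by itself, be bootstrapped to $1-o(1)$ via a union bound over codewords.

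The paper's proof is built on a different mechanism precisely to avoid this union bound. It stops at the weak statement you can actually prove---$\Pr_{z\sim p-n^{-1/4}}[D^*(z)=0]\ge 1/(2L)$, which is exactly your ``at most $L$ dangerous codewords per $e$'' observation together with Hoeffding and optimality of ML (Fact~\ref{bestdecoder})---and then invokes a \emph{sharp threshold} theorem for the ML success probability (Theorem~\ref{gbound}). That theorem is proved by generalizing Margulis--Russo (Lemma~\ref{russoFq}) and a Talagrand isoperimetric inequality (Theorems~\ref{thmtalagrand},~\ref{isoperimetry}) from $\F_2$ to $\F_q$, and then lower-bounding the boundary quantity $\Delta_\Omega$ by $d_{\min}/q-3$ using only linearity and a symmetric tie-breaking rule (Lemma~\ref{bounddelta}). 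The $q^{3/2}$ and $\sqrt{d_{\min}}$ in the threshold width---and hence the $q^3/(1-p)^2$ and $\ln^4(nL)$ in the hypothesis---come from this isoperimetric step, not from balancing a per-weight union bound as you conjecture. You should look at the Margulis--Russo--Talagrand machinery; the ``$1/(2L)$ then sharp threshold'' two-step is the whole idea, and it bypasses the medium-weight counting problem that blocks your argument.
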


See Section \ref{sectionmainresult} for the proof of Theorems \ref{capacitybridge} and \ref{mainresult}. 

\begin{remark}[Requirement on the field size]\label{rem:fieldsize}
    As stated, the hypothesis of Theorem \ref{mainresult} requires that $d_{\min}(C)$ be at least $q^3$, which in particular implies that $q < n^{1/3}$. This rules out codes like Folded Reed-Solomon codes, where $q$ is a large polynomial in $n$.  However, for known constructions of list-decodable codes over growing alphabets, the conclusion of Theorem~\ref{mainresult} already follows from previous work, and so our contribution is to focus on the case where $q \geq 2$ is fixed.  
    
    In more detail, it is known that any linear code with distance at least $pn$ over a sufficiently large alphabet admits reliable communication over the qSC$_{p'}$ for some $p' = p - o(1)$~\cite{RU10}; and it is also known that any linear $(p,L)$-list-decodable code $C \subseteq \F_q^n$ with $q \geq L$ has distance at least $pn$~\cite[Proposition 6.5]{goldberg2024ListToDistance}.  Thus, any linear $(p,L)$-list-decodable code $C \subseteq \F_q^n$ with $q = \omega(1)$ and $L=O(1)$ has distance at least $pn$ by \cite{goldberg2024ListToDistance}, and therefore admits reliable communication on the qSC by \cite{RU10}.
\end{remark}

We note that the converse of Theorem \ref{capacitybridge} (qSC capacity implying list-decoding capacity) does not hold in general. For instance, Reed-Muller codes achieve capacity over the BSC$_p$ \cite{reeves2021bitcapacity,abbe2023rmcapacityBSC} but have $2^{\Omega(n)}$ codewords of weight $\leq pn$, for every constant $p$ (see, e.g., \cite{abbe2023survey2}). We leave it as an intriguing open problem to identify sufficient conditions under which the converse holds. 

We show in Appendix~\ref{app:distance} that the requirement that the minimum distance be large cannot be avoided entirely.  More precisely, we show that there exist codes $C\subseteq\F_q^n$ with constant minimum distance that achieve list decoding capacity but do not achieve capacity on the $q$-ary symmetric channel.

Finally, we give a simple proof that Theorem \ref{mainresult} holds for erasures, without the linearity and alphabet size requirements.
\begin{theorem}\label{capacitythm}
    Let $C\subseteq\Z_q^n$ be a $(p,L)$-list-decodable code with minimum distance $d_{\min}(C) = \omega(\log L).$ Then $C$ can be used for reliable communication on the q-ary erasure channel, qEC$_{p'}$, where $p' = p - o(1)$.
\end{theorem}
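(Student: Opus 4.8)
The plan is to show that a random message sent through the qEC$_{p'}$ can be recovered with high probability, by combining the list-decodability of $C$ with the large minimum distance. Suppose a codeword $c \in C$ is transmitted and a set $E \subseteq [n]$ of coordinates is erased, where each coordinate lands in $E$ independently with probability $p' = p - o(1)$. By a Chernoff bound, with probability $1 - o(1)$ we have $|E| \le pn$, so I may condition on this event. The decoder, seeing the unerased coordinates, first fills in the erased positions arbitrarily (say with $0$'s) to obtain a word $w \in \Z_q^n$ with $d(w, c) \le |E| \le pn$; by $(p,L)$-list-decodability, the list $\mathcal{L} = \{ c' \in C : d(w, c') \le pn\}$ has size at most $L$ and contains $c$. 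The decoder outputs an element of $\mathcal{L}$ that agrees with $w$ on all of $[n] \setminus E$ (such an element exists, namely $c$).

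The key step is to argue that, with high probability over the random erasure pattern $E$, the transmitted codeword $c$ is the \emph{unique} element of $\mathcal{L}$ consistent with the unerased coordinates, so the decoder succeeds. Fix any $c' \in \mathcal{L} \setminus \{c\}$. Since $c' \ne c$, they differ in at least $d_{\min}(C)$ coordinates; $c'$ is consistent with $w$ on $[n]\setminus E$ only if \emph{every} coordinate where $c$ and $c'$ differ lies in $E$. The probability of this is at most $(p')^{d_{\min}(C)} \le p^{d_{\min}(C)}$, which is exponentially small in $d_{\min}(C)$. The subtlety is that the list $\mathcal{L}$ itself depends on $E$ (through $w$), so I cannot naively union bound over a fixed list. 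To handle this, I would instead union bound over \emph{all} codewords $c' \ne c$ with $d(c, c') \le 2pn$ — any element of $\mathcal{L}$ necessarily has $d(c,c') \le d(c,w) + d(w,c') \le 2pn$ — but the number of such codewords could be large. The cleaner fix: bound the failure probability as $\Pr[\exists c' \in \mathcal{L}, c' \ne c, c' \text{ consistent with } w] \le \Pr[|\mathcal{L}| > L] + \E[\,|\{c' \in \mathcal{L} : c' \ne c, c' \text{ consistent}\}|\,]$, and the first term is $0$ by list-decodability (given $|E| \le pn$), while for the second term I use linearity of expectation over the event ``$c' \in \mathcal{L}$ and $c'$ consistent with $w$'', which for a fixed $c'$ implies in particular that all $\ge d_{\min}(C)$ differing coordinates are erased, an event of probability $\le (p')^{d_{\min}(C)}$. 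Summing over the at most $|C| \le q^n$ codewords $c'$ gives failure probability $\le q^n (p')^{d_{\min}(C)}$, which is $o(1)$ as soon as $d_{\min}(C) = \omega(n / \log(1/p))$ — but this is \emph{too strong} a requirement, not the claimed $\omega(\log L)$.

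To get down to $d_{\min}(C) = \omega(\log L)$, I need to avoid the union bound over all of $C$ and genuinely exploit that the relevant list has size $\le L$. The right approach: observe that the decoder only fails if some $c' \ne c$ in $C$ satisfies both (i) $c'$ agrees with $c$ off $E$ (equivalently, $c \oplus c'$ is supported on $E$), and (ii) $d(w, c') \le pn$. Condition (i) alone already forces the $\ge d_{\min}(C)$-sized support of $c \oplus c'$ into $E$. Now here is the point: condition on the erasure set $E$ with $|E| \le pn$; the set of codewords $c'$ satisfying (i) is exactly the coset-like set $\{c' \in C : \mathrm{supp}(c \oplus c') \subseteq E\}$, and each such $c'$ automatically satisfies $d(w,c') \le |E| \le pn$, so it lies in $\mathcal{L}$; hence the number of ``bad'' $c'$ is at most $|\mathcal{L}| - 1 \le L - 1$. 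So the decoder fails only if there \emph{exists} such a bad $c'$, and I bound $\Pr[\exists \text{ bad } c']$ by a first-moment argument over a cleverly chosen smaller family. Concretely, I partition codewords $c'$ by whether $d(c,c')$ is ``small'' or ``large''; but the cleanest route, and the one I would write up, is: the number of codewords within distance $2pn$ of $c$ that could ever appear in such a list is a quantity I can control using the Johnson-type / list-decoding bound — for $v \in \Z_q^n$ the number of codewords in a ball is $\le L$ — and then, erasure-pattern by erasure-pattern, I note that being ``bad'' requires $d_{\min}(C)$ specific coordinates to be erased. I expect the main obstacle to be exactly this tension: a direct union bound over $C$ needs $d_{\min} = \omega(n)$, while the list bound needs to be invoked in a way that the "bad" events are not independent of the list. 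The resolution is to observe that the expected number of pairs $(c', E)$ with $c'$ bad for $E$ is $\sum_{c' \ne c} \Pr[\mathrm{supp}(c\oplus c') \subseteq E] \cdot \mathbbm{1}[\dots]$ and to truncate this sum using the fact that, by list-decodability applied at $w = c$, at most $L$ codewords lie within distance $pn$ of $c$; codewords farther than $pn$ from $c$ have $\ge pn$ differing coordinates and contribute at most $|C| \cdot (p')^{pn}$ which is $o(1)$, while the $\le L$ close codewords each contribute $\le (p')^{d_{\min}(C)}$, for a total of $o(1) + L \cdot (p')^{d_{\min}(C)} = o(1)$ precisely when $d_{\min}(C) = \omega(\log L)$. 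Averaging over the choice of random codeword $c$ (or noting the bound holds for every $c$) and combining with the $|E| \le pn$ Chernoff event completes the proof; neither linearity nor a bound on $q$ is used, matching the statement.
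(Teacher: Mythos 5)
Your overall strategy matches the paper's: reduce the failure event to ``some $c' \neq c$ agrees with $c$ off the erased set $E$,'' note that such a $c'$ must have its entire $\geq d_{\min}$ differing coordinates land inside $E$ (probability at most $(p')^{d_{\min}} = o(1/L)$), and restrict attention to codewords within distance $pn$ of $c$, of which there are at most $L$ by list-decodability applied at $c$ itself. That is exactly the paper's argument. However, the specific way you dispose of the ``far'' codewords is wrong as written: you claim that codewords at distance $> pn$ from $c$ contribute at most $|C|\cdot (p')^{pn}$, ``which is $o(1)$.'' This need not be $o(1)$. Even for a capacity-rate code one has $|C|\approx q^{(1-h_q(p))n}$, and $q^{(1-h_q(p))n}\, p^{pn}$ can grow exponentially (e.g.\ $q=2$, $p=0.1$), and the theorem places no upper bound on the rate of $C$ at all, so nothing controls $|C|$ here.

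The fix is one you already state correctly earlier in your own write-up but then abandon in the final accounting: do not bound the far codewords' contribution by a sum at all. Instead, first invoke the Chernoff/Hoeffding event $\{|E|\leq pn\}$, which holds with probability $1-o(1)$ when $p' = p - o(1)$ at a rate like $n^{-1/2}\log n$. On that event, \emph{no} codeword at distance $> pn$ from $c$ can agree with $c$ off $E$ (its support of differences simply does not fit inside $E$), so the far codewords contribute \emph{zero}, not $|C|(p')^{pn}$. The failure probability is then at most $\Pr[|E|>pn] + L\cdot (p')^{d_{\min}} = o(1) + o(1)$, with no appeal to any bound on $|C|$, which is precisely what the paper does. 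With that single correction, your proof is the paper's proof.
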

We prove Theorem~\ref{capacitythm} in Appendix \ref{aerasure}.
\vspace{.3cm}

\begin{remark}[New results for capacity-achieving codes on the qSC$_p$?]
It is natural to ask whether or not our results imply new constructions of capacity-achieving codes on the qSC$_p$.  The answer is yes, in that there are known capacity-achieving list-decodable codes over constant-sized alphabets, even with efficient algorithms, to which our theorem applies and which to the best of our knowledge had not previously been known to achieve capacity on the qSC (e.g.~\cite{Hemenway2017AGcodes2,Kopparty2018multiplicitycodes3,Guo2022AGcodes3}).  However, it seems unlikely that these constructions---which are designed to solve the (as we show here) strictly harder problem of list-decoding---would yield more practical results on the qSC than polar codes~\cite{arikan2009polar}.  Thus, our main contribution is to establish a formal connection between these two problems.
\end{remark}

\subsection{Overview of Techniques}\label{sec:tech}
Before we discuss our techniques, we set up a bit more notation.  For a transmitted codeword $c \in C$, we write the received corrupted codeword as \[ \tilde{c} = c + z,\]
where $z \in \F_q^n$ is an error vector.  Thus, in the qSC$_p$, each coordinate of $z_i$ is independent, and is equal to $0$ with probability $1-p$ and is uniformly random in $\F_q^*$ with probability $p$.  In the adversarial channel, $z$ is an arbitrary vector of weight at most $pn$.

Now we can give a high-level overview of our techniques.
Let $D^*:\F_q^n\rightarrow C$ be a maximum-likelihood decoder on the qSC$_p$.\footnote{That is, given a corrupted codeword $\tilde{c} = c + z$, the decoder $D^*$ returns a closest codeword $D^*(\tilde{c})$ to $\tilde{c}$.}  As $C$ is linear, we can pick $D^*$ so that its success depends only on the error vector $z$, and we may assume without loss of generality that the transmitted codeword was $c=\vec{0}$.  Define the function
\begin{align}\label{defnfintro}
    f(z):=\begin{cases}
1 & \text{if $D^*(z)=\vec{0}$},\\
0 & \text{otherwise}.
\end{cases}
\end{align}
The expectation of $f$ is exactly the probability that $D^*$ outputs the correct codeword. Showing that $C$ allows reliable communication on qSC$_{p'}$ for some $p' = p-o(1)$ is thus equivalent to showing that
\begin{align}\label{endpointtrans}
    \E_{z\sim p'}[f(z)]=1-o(1).
\end{align}
In the equation above, we use $z\sim p'$ to mean that, independently for each coordinate $i\in[n]$,
\[ z_i = \begin{cases} 0 & \text{ with probability } 1 - p' \\ j \in \{1, 2, \ldots, q-1\} & \text{ with probability }  \frac{p'}{q-1} \end{cases}.\]  One key observation is that if $C$ is $(p,L)$-list-decodable, then we must have
\begin{align}\label{startingpoint}
    \E_{z\sim p'}[f(z)]\geq \frac{1}{L}-o(1)
\end{align}
for, say, $p' = p - n^{-1/4}$.
Indeed, consider the following decoder $D$: upon receiving some corrupted codeword $m\in\F_q^n$, find all the codewords $c\in C$ such that wt$(c+m)\leq pn$, and output one such codeword uniformly at random. Since $C$ is $(p,L)$-list decodable, there can never be more than $L$ codewords $c\in C$ satisfying wt$(c+m)\leq pn.$ Thus as long as the error string $z$ has weight smaller than $pn$, the decoder $D$ will succeed in outputting the correct codeword with probability at least $\frac{1}{L}$. But if $z \sim p'$, then $z$ has weight smaller than $pn$ with high probability.
Thus, on $z\sim p'$ our decoder $D$ outputs the correct codeword with probability at least $\frac{1}{L}-o(1)$. Since the max-likelihood decoder $D^*$ is optimal, it must perform at least as well as the decoder $D$, and we thus get (\ref{startingpoint}).

In order to deduce (\ref{endpointtrans}) from (\ref{startingpoint}), it will then suffice to show that the function $g(p')=\E_{z\sim p'}[f(z)]$ has a sharp transition as a function of $p'$. This was proven for $q=2$ in \cite{zemor93application,tillich2000applications2}.  Thus, combined with their result, the above argument immediately implies Theorem~\ref{mainresult} for binary codes.  Our main technical contribution is to generalize their arguments to larger field sizes.\footnote{As discussed in Section~\ref{sec:related}, such a generalization was asserted in~\cite{kindarji2010generalization} in a different context, but as discussed in Appendix~\ref{appendixmistake}, we believe that their proof is missing key elements.} Formally, our goal will be to bound the derivative of $\E[f]$ by 
\begin{align}\label{goalintro}
\frac{d}{d p}\E_{z\sim p}[f(z)]\leq-\omega(1) \cdot \E_{z\sim p}[f(z)]\big(1-\E_{z\sim p}[f(z)]\big)
.
\end{align}
Margulis \cite{margulis1974transition} and Russo \cite{russo1982transition} pioneered the use of such inequalities for proving that the expectation of certain Boolean functions transitions quickly from $1-o(1)$ to $o(1)$.  The point is that whenever $\E[f]$ is away from $0$ and away from $1$, we have $\E[f](1-\E[f])$ away from $0$, and thus the $\omega(1)$ term in (\ref{goalintro}) ensures that in this regime the derivative of $\E[f]$ is large. 

Now for any monotone function $f:\F_q^n\rightarrow\{0,1\}$, we can bound the derivative of the expectation as 
        \begin{align}\label{russointro}
        \frac{d}{dp}\E_{z\sim p}[f(z)]\leq-\frac{1}{q-1}\E_{z\sim p}[h_f(z)],
    \end{align}
where we define the quantity  
\begin{align*}
    h_f(z):=\begin{cases}
\Big| \big\{i\in[n]:z_i=0 \textnormal{ and }\exists a\neq0 \textnormal{ s.t. }f(za_i)=0   \big\} \Big| & \text{if $f(z)=1$},\\
0 & \text{otherwise}.
\end{cases}
\end{align*}

For $q=2$, the inequality (\ref{russointro}) first appeared in \cite{margulis1974transition,russo1982transition} and is called Russo's Lemma; the generalization to larger $q$ is stated as Lemma~\ref{russoFq}. To make use of (\ref{russointro}), we prove in Section \ref{sectionisoperimetry} the following isoperimetric inequality, which is a generalization of a bound proven by Talagrand for $q=2$:
\begin{align}\label{isointro}
    \E_{z\sim p}\big[h_f(z)\big]&\geq \frac{1-p}{2}\sqrt{\Delta_f}\cdot \E_{z\sim p}[f(z)]\big(1-\E_{z\sim p}[f(z)]\big),
\end{align}
where we denoted the minimum positive value of $h_f$ by
\begin{align*}
    \Delta_f:=\min_{x\in\F_q^n:h_f(x)\neq 0}\{h_f(x)\}.
\end{align*}
To obtain our desired inequality (\ref{goalintro}) from the bounds (\ref{russointro}) and (\ref{isointro}), it will thus suffice to show that our specific function $f$ (the indicator function of a successful decoding, defined in (\ref{defnfintro})) satisfies $\Delta_f=\omega(1)$. That is, we want to show that if $x\in\F_q^n$ is some error string that leads to correct decoding, and if one of the neighbors of $x$ leads to incorrect decoding, then there must be many such ``bad'' neighbors of~$x$. 

Intuitively, this is because in order for $x$ and one of its neighbors to be mapped to different codewords, it must be the case that $x$ is about halfway between the transmitted codeword (as above, let us assume that the transmitted codeword is the all-$0$ codeword), and some other codeword $c.$ Formally, it must be the case that
\begin{align*}
    d(x,0)\leq d(x,c)\leq d(x,0)+2,
\end{align*}
where $d(a,b)$ denotes the Hamming distance between vectors $a$ and $b$.
For simplicity, in this section assume that $d(x,c)=d(x,0)+1$. Then for any coordinate $i\in[n]$ where $x_i=0$ and $c_i\neq 0$, the neighbor $x'$ of $x$ obtained by setting the $i^\textnormal{th}$ coordinate to $c_i$ is closer to $c$ than to $0$. Each such neighbor $x'$ thus satisfies $f(x')=0$, and we have
\begin{align*}
    h_f(x)\geq \big|\big\{  i\in[n]:x_i=0,c_i\neq 0 \big\}\big|.
\end{align*}
The exact number of such coordinates $i\in[n]$ can be bounded in terms of the distance between $0$ and $c$, which itself is bounded by the minimum distance of the code. See Section \ref{sectionsharptransition} for more details.

Once we obtain a lower bound on $\Delta_f$, our desired inequality (\ref{goalintro}) follows from equations (\ref{russointro}) and (\ref{isointro}). 

\subsection{Related Work}\label{sec:related}
Both capacity-achieving codes on the qSC and capacity-achieving list-decodable codes are extremely well-studied, with lines of work going back to the 1950's or earlier.  In this section, we mention a few of the most relevant works.
\vspace{.5cm}

\paragraph{\textbf{Capacity-Achieving Codes on the qSC$_p$.}}
As noted above, the capacity of the qSC$_p$ has been known since Shannon's seminal work in the 1940's~\cite{shannon1948entropy}.  Shannon already observed that random codes achieve capacity, but it was not until the 1960's that explicit constructions were obtained with Forney's \emph{concatenated codes}~\cite{forney1966concatenated}.  These codes have poor performance in terms of the gap to capacity, which has been more recently improved by Ar{\i}kan with polar codes~\cite{arikan2009polar}; polar codes also have efficient decoding algorithms.  Even more recently, Reed-Muller codes have also been shown to achieve capacity on the binary symmetric channel~\cite{abbe2015RMlowrate,kudekar2016erasure,reeves2021bitcapacity,abbe2023rmcapacityBSC}.  Ensembles of LDPC codes are also known to achieve capacity under maximum-likelihood decoding~\cite{gallager1962ldpc}, or even under efficient algorithms~\cite{luby1997ldpc2,kudekar2013ldpc}. 

\vspace{0.5cm}
\paragraph{\textbf{List decoding.}}
The notion of list decoding was first introduced by Elias \cite{elias1957firstlist} and Wozencraft \cite{wozencraft1958firstlist}, and since then several ensembles of codes have been shown to achieve list-decoding capacity.  Here, we survey these codes and also comment on how Theorem~\ref{capacitybridge} applies to them.

The original work of Elias and Wozencraft showed that uniformly random codes achieve list-decoding capacity with high probability, and a more recent line of work has established the same for uniformly random linear codes \cite{zyablov1981list,Guruswami2010randomlinear1,wootters2013randomlinear2,Li2021randomlinear3,Guruswami2022randomlinear4}.  Both random codes and random linear codes are known to achieve capacity on the qSC with high probability, so our results do not imply anything new for these ensembles.

The first explicit constructions were obtained by Guruswami and Rudra, who showed in \cite{Guruswami2008FoldedRS} that \emph{folded Reed-Solomon codes} achieve list decoding capacity.  
This breakthrough was followed by a line of work on explicit constructions, both improving the list size for folded RS codes (e.g.,~\cite{Kopparty2018multiplicitycodes3,tamo2024tighterlists,S24,CZ24}), and by extending these results to other families of codes, including multiplicity codes~\cite{Guruswami2013multiplicitycodes1,Kopparty2015multiplicitycodes2}.  
Going back to randomized constructions (though with more structure than random linear codes), a recent line of work has also established that randomly punctured Reed-Solomon codes \cite{Brakensiek2023puncturedRS1,Guo2023puncturedRS2,Alrabiah2024puncturedRS3,Berman2024puncturedRS4} are also capacity-achieving list-decodable codes.
As discussed in Remark~\ref{rem:fieldsize}, our resuls to not apply to these codes (as they are over large alphabets), but for those with constant list-sizes, the fact that they achieve capacity on the qSC follows from known results.

Finally, we turn to constructions of capacity-achieving list-decodable codes with constant alphabet sizes.  These include Gallager's ensemble of
LDPC codes \cite{Mosheiff2020ldpccapacity}; concatenated codes~\cite{GRconcat}; and constructions based on algebraic geometry (AG) codes, for example~\cite{Guruswami2013AGcodes1,Hemenway2017AGcodes2,Guo2022AGcodes3,Guruswami2022AGcodes4,Brakensiek2023AGcodes5}. 
Some of these codes---for example, \cite{Hemenway2017AGcodes2, Guo2022AGcodes3}---come with efficient list-decoding algorithms and were also not already known to achieve capacity on the qSC$_p$.  Combined with our results, this implies that not only do these codes achieve capacity on the qSC$_p$, but they also have efficient decoding algorithms. 
\vspace{0.5cm}

\paragraph{\textbf{Threshold Phenomena.}}  
As mentioned in Section~\ref{sec:tech}, our main technical contribution is a threshold result for the success probability of decoding on the qSC$_p$.  Our techniques build on a long line of work, which we briefly mention here.

 Margulis \cite{margulis1974transition}, Russo \cite{russo1982transition} and Talagrand \cite{Talagrand93} showed that the expectation of any monotone\footnote{See Section \ref{defnmonotone} for a formal definition of monotonicity.} Boolean function $f$ sharply transitions from $\E[f]\geq 1-o(1)$ to $\E[f]\leq o(1)$, as long as no $z\in f^{-1}(1)$ has a small, nonzero number of neighbors in $f^{-1}(0)$. This fact has already seen many applications in coding theory \cite{zemor93application,tillich2000applications2,tillich2004applications3,kudekar2016erasure,kumar2016applications4}. In particular, Tillich and Z{\'e}mor proved in \cite{zemor93application,tillich2000applications2} that the decoding probability of any binary linear code with large minimum distance undergoes a sharp transition. Our main technical contribution is a generalization of these results to larger alphabets (see Sections \ref{sectionisoperimetry} and \ref{sectionsharptransition}, and in particular Theorem \ref{gbound}).

We note that an attempt was already made in \cite{kindarji2010generalization} to generalize the results of Tillich and Z{\'e}mor to larger alphabets, but unfortunately the proof in~\cite{kindarji2010generalization} seems to be missing key elements and does not seem to be correct as written.  We explain in Appendix~\ref{appendixmistake} why we think that a new proof of this generalization is needed.

\section{Conventions and Preliminaries}
\subsection{Finite fields}
We will work with the finite field $\F_q$ over $q$ elements. Given any vector $z\in\F_q^n$, we denote its Hamming weight by
\begin{align*}
    \textnormal{wt}(z):=\big\{i\in[n]:z_i\neq 0\big\}.
\end{align*}
Given two vectors $y,z\in\F_q^n$, we denote their Hamming distance by
\begin{align*}
    d(y,z):=\textnormal{wt}(z-y).
\end{align*}

\subsection{Probability Theory}

For any probability distribution $\mathcal{D}$ over a set $X$, we will use the notation $x\sim \mathcal{D}$ to denote a random element $x\in X$ sampled according to $ \mathcal{D}.$ The main probability distribution that we will use throughout this paper is the distribution of a \emph{$p$-noisy string} over $\F_q^n$, which we define as follows.  For $p \in [0,1]$, we use  
$$z\sim p$$
to denote a $p$-noisy string $z \in \F_q^n$, meaning that for each $i \in \{1, \ldots, n\}$, $i^\textnormal{th}$ entry is independently
\begin{align*}
    z_i&=\begin{cases}
0 & \text{with probability $1-p$},\\
j& \text{with probability $\frac{p}{q-1}$, for all $j\in\{1,2,\dotsc,q-1\}$}.
\end{cases}
\end{align*} 

We will also need Hoeffding's inequality (see for example, \cite{concentrationinequalities}).

\begin{lemma}[Hoeffding's Inequality]\label{hoeffding}
    Let $X_1,X_2,\dotsc,X_n$ be  independent random variables taking values in $[0,1]$. Then for any $t>0$, we have
    \begin{align*}
    \Pr\Big[\sum_{i=1}^nX_i>\sum_{i=1}^n\E[X_i]+t\Big]\leq e^{-\frac{2t^2}{n}}
    \end{align*}
    and
    \begin{align*}
    \Pr\Big[\sum_{i=1}^nX_i<\sum_{i=1}^n\E[X_i]-t\Big]\leq e^{-\frac{2t^2}{n}}.
    \end{align*}
\end{lemma}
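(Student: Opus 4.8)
The plan is to run the standard exponential-moment (Chernoff--Cram\'er) argument. Set $S=\sum_{i=1}^n X_i$, $\mu=\sum_{i=1}^n\E[X_i]$, and $Y_i=X_i-\E[X_i]$, so that each $Y_i$ has mean zero and is supported in an interval $[a_i,b_i]$ with $a_i\leq 0\leq b_i$ and $b_i-a_i=1$ (here we use $X_i\in[0,1]$, so $\E[X_i]\in[0,1]$). For any $\lambda>0$, applying Markov's inequality to $e^{\lambda(S-\mu)}$ and then using independence of the $X_i$ gives
\[
\Pr\!\left[\sum_{i=1}^n X_i>\mu+t\right]=\Pr\!\left[e^{\lambda(S-\mu)}>e^{\lambda t}\right]\leq e^{-\lambda t}\prod_{i=1}^n\E\!\left[e^{\lambda Y_i}\right].
\]
Thus it remains to bound each factor $\E[e^{\lambda Y_i}]$ and then optimize over $\lambda$.

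The heart of the argument is \emph{Hoeffding's lemma}: if $Y$ is mean-zero and supported on $[a,b]$, then $\E[e^{\lambda Y}]\leq e^{\lambda^2(b-a)^2/8}$. I would prove it by convexity: since $x\mapsto e^{\lambda x}$ is convex, $e^{\lambda x}\leq\frac{b-x}{b-a}e^{\lambda a}+\frac{x-a}{b-a}e^{\lambda b}$ for $x\in[a,b]$; taking expectations and using $\E[Y]=0$ gives $\E[e^{\lambda Y}]\leq\frac{b}{b-a}e^{\lambda a}-\frac{a}{b-a}e^{\lambda b}=e^{\psi(\lambda)}$, where, writing $p=-a/(b-a)\in[0,1]$ and $u=\lambda(b-a)$, one has $\psi(\lambda)=-pu+\ln\!\big(1-p+pe^{u}\big)$. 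A direct computation shows $\psi(0)=\psi'(0)=0$, while $\psi''(\lambda)=(b-a)^2\rho(1-\rho)$ with $\rho=\rho(\lambda)=\frac{pe^{u}}{1-p+pe^{u}}\in(0,1)$; since $\rho(1-\rho)\leq\tfrac{1}{4}$, Taylor's theorem with Lagrange remainder about $\lambda=0$ yields $\psi(\lambda)\leq\lambda^2(b-a)^2/8$.

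Applying the lemma with $b_i-a_i=1$ to each $Y_i$ gives $\E[e^{\lambda Y_i}]\leq e^{\lambda^2/8}$, hence $\Pr[\sum_i X_i>\mu+t]\leq e^{-\lambda t+n\lambda^2/8}$ for all $\lambda>0$. The exponent is minimized at $\lambda=4t/n$, giving value $-2t^2/n$, which is the claimed upper-tail bound. For the lower tail, apply the upper-tail bound to the variables $X_i'=1-X_i\in[0,1]$: their sum is $n-S$ with mean $n-\mu$, and the event $\{S<\mu-t\}$ is exactly $\{\sum_i X_i'>(n-\mu)+t\}$, so the same estimate $e^{-2t^2/n}$ follows.

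The step I expect to be the only real work is the proof of Hoeffding's lemma, and within it the identity $\psi''(\lambda)=(b-a)^2\rho(1-\rho)$: one differentiates $\psi'(\lambda)=(b-a)\big(\rho(\lambda)-p\big)$ and checks that $\rho'(u)=\rho(1-\rho)$, after which the bound $\rho(1-\rho)\leq\tfrac{1}{4}$ is immediate. Everything else---the Chernoff--Cram\'er reduction, the factorization over independent coordinates, and the optimization in $\lambda$---is routine.
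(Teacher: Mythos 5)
The paper does not prove this lemma; it is quoted as a standard fact with a citation to a textbook on concentration inequalities. Your argument is the classical Chernoff--Cram\'er proof via Hoeffding's lemma, and it is complete and correct (including the reduction of the lower tail to the upper tail via $X_i'=1-X_i$), so it matches the standard proof the paper is implicitly invoking.
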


\subsection{Monotonicity}\label{defnmonotone}
A property that will play a key role in our analysis is monotonicity. We say that a function $f:\F_q^n\rightarrow \{0,1\}$ is \emph{monotone decreasing} if for any index $i\in[n]$, any point $z\in \F_q^{n}$, and any $a\in\{1,2,\dotsc,q-1\}$, we have
\begin{align*}
    f(z0_i)\leq f(za_i).
\end{align*}
The expectation of any monotone decreasing function is decreasing (see for e.g. \cite{concentrationinequalities}, page 280).
\begin{fact}\label{strictlyincreasing}
    For any monotone decreasing function $f:\F_q^n\rightarrow\{0,1\}$, the function
    \begin{align*}
        g(p):=\E_{z\sim p}[f(z)]
    \end{align*}
    is decreasing. 
\end{fact}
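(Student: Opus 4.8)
The plan is to prove this by an explicit monotone coupling of the $p$-noisy and $p'$-noisy distributions. Fix $0\le p\le p'\le 1$; since $g$ is defined on $[0,1]$ it suffices to show $g(p)\ge g(p')$. For each coordinate $i\in[n]$, independently draw a uniform real $u_i\in[0,1]$ together with a uniform symbol $w_i\in\{1,2,\dotsc,q-1\}$, all $2n$ of these variables mutually independent, and set
\[ z_i=\begin{cases}0 & u_i\le 1-p\\ w_i & u_i>1-p\end{cases},\qquad z_i'=\begin{cases}0 & u_i\le 1-p'\\ w_i & u_i>1-p'\end{cases}. \]
A direct check of the two cases shows that the marginal law of $z$ is that of a $p$-noisy string and the marginal law of $z'$ is that of a $p'$-noisy string: a coordinate of $z$ is $0$ with probability $\Pr[u_i\le 1-p]=1-p$, and equals a fixed $j\in\{1,\dotsc,q-1\}$ with probability $\Pr[u_i>1-p]\cdot\Pr[w_i=j]=\frac{p}{q-1}$, and similarly for $z'$ with $p'$ in place of $p$. (Reusing the same $w_i$ for both strings is exactly what preserves the ``uniform on the nonzero symbols'' requirement in each marginal.)

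The key point is that $p\le p'$ forces $1-p'\le 1-p$, so the event $\{u_i>1-p\}$ is contained in $\{u_i>1-p'\}$; hence on every coordinate with $z_i\neq 0$ we also have $z_i'\neq 0$, and in that case $z_i'=w_i=z_i$. In other words, $z'$ is obtained from $z$ by taking the (random) set $S=\{i: 1-p'<u_i\le 1-p\}$ of coordinates that are $0$ in $z$ and overwriting each of them with a nonzero symbol. Enumerating $S=\{i_1,\dotsc,i_m\}$ and passing from $z$ to $z'$ one coordinate at a time, each step replaces a zero entry by a nonzero entry, so by the monotonicity hypothesis applied at that coordinate (i.e., comparing $f(\,\cdot\,0_{i_k})$ with $f(\,\cdot\,a_{i_k})$ along the intermediate vectors, all of which lie in $\F_q^n$) the value of $f$ can only decrease. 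Chaining these $m$ inequalities gives $f(z)\ge f(z')$ pointwise on the coupling space.

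Taking expectations over the coupling and invoking the two marginal identities then yields $g(p)=\E[f(z)]\ge\E[f(z')]=g(p')$, which is precisely the assertion. I do not expect a genuine obstacle here: the argument is a standard coupling, and the only points requiring a little care are (i) verifying that the coupling has exactly the right marginals — in particular that sharing $w_i$ between $z$ and $z'$ does not disturb the distribution of a nonzero coordinate — and (ii) checking that the coordinate-by-coordinate interpolation from $z$ to $z'$ passes only through genuine points of $\F_q^n$, so that the monotonicity property can be applied verbatim at each step. Both are routine.
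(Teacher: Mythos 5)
Your coupling proof is correct, and it is the standard way this fact is established; the paper itself does not include an argument but refers the reader to a textbook (page 280 of the cited reference), where the underlying idea is precisely a monotone coupling of the two product measures. Your handling of the marginals, the nestedness of the thresholds $1-p'\le 1-p$, and the coordinate-by-coordinate interpolation (each step flips a single zero entry to a nonzero one, so the stated monotonicity hypothesis applies verbatim) are all sound. One small remark on the parenthetical about reusing $w_i$: the marginals would still be correct with independent copies of $w_i$ for $z$ and $z'$; what reusing $w_i$ actually buys you is that on any coordinate where both $z_i$ and $z_i'$ are nonzero they are \emph{equal}, which is what makes the pointwise comparison $f(z)\ge f(z')$ go through (the monotonicity hypothesis gives no comparison between two distinct nonzero symbols). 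Finally, note that the inequality you use, $f(z0_i)\ge f(za_i)$, is the one the paper actually works with in the proof of Lemma \ref{russoFq}; the display in the definition of monotone decreasing in Section \ref{defnmonotone} has the inequality written the other way (apparently a typo), and your proof implicitly uses the corrected version, which is the only one consistent with the conclusion that $g$ is decreasing.
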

We derive some useful properties of monotone functions in Section \ref{sectionisoperimetry}.

\subsection{Coding and Decoding}
A linear code of length $n$ over the field $\F_q$ is a subspace $C\subseteq\F_q^n$. 
A deterministic decoder for a code $C\subseteq\F_q^n$ is a function $D:\F_q^n\rightarrow C$, and a randomized decoder is a probability distribution over the set of deterministic decoders. We say that the code $C\subseteq \F_q^n$ \emph{admits reliable communication} on the qSC$_p$ if there exists a randomized decoder $D$ such that for all $c\in C$, we have 
\begin{align*}
    \Pr\big[D(c+z)=c\big]\geq 1-o(1),
\end{align*}
where the probability is taken over both the randomized decoder $D$ and the random variable $z\sim p$.
A \emph{maximum-likelihood} decoder $D^*$ over the qSC is a deterministic decoder that, upon seeing a corrupted message $m$, returns a codeword $c\in C$ closest to $m$. We say that $D^*$ is \emph{symmetric} if it breaks ties in a symmetric way, meaning that if $D^*(z)=0$ then $D^*(z+c)=c$. It is well-known that any symmetric maximum-likelihood decoder is optimal for decoding random errors (see, e.g. \cite{1977bookkrawtchouk}, page 8).
\begin{fact}\label{bestdecoder}
    For any code $C\subseteq\F_q^n$, any corresponding symmetric maximum-likelihood decoder $D^*$ is optimal. That is, for any $p\in(0,1)$ and any decoder $D$, we have
    \begin{align*}
        \min_{c\in C}\Big\{ \Pr_{\substack{z\sim p}}\big[D(c+z)=c\big] \Big\}\leq \min_{c\in C}\Big\{ \Pr_{\substack{z\sim p}}\big[D^*(c+z)=c\big] \Big\}.
    \end{align*}
\end{fact}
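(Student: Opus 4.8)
The plan is to compute exactly the worst-case (over codewords) success probability of $D^*$ using its symmetry, and then to show that every other decoder has worst-case success probability at most this value by comparing \emph{averages} over the $|C|$ codewords instead of worst cases. First I would record the consequence of symmetry: since $D^*(z)=\vec 0$ if and only if $D^*(z+c)=c$, the event $\{z:D^*(c+z)=c\}$ equals the fixed set $S:=\{z\in\F_q^n:D^*(z)=\vec 0\}$ for every $c\in C$, so
\begin{align*}
\min_{c\in C}\Pr_{z\sim p}[D^*(c+z)=c]=\Pr_{z\sim p}[z\in S].
\end{align*}
(I would also note in passing that a symmetric minimum-distance decoder exists at all, e.g.\ by fixing a minimum-weight leader in each coset of $C$ and decoding $m$ to $m$ minus the leader of its coset, since the Fact is claimed for \emph{any} such $D^*$.)

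Next, for a deterministic decoder $D:\F_q^n\to C$, I would rewrite its total success probability as a sum over all of $\F_q^n$: since the preimages $D^{-1}(c)$ partition $\F_q^n$,
\begin{align*}
\sum_{c\in C}\Pr_{z\sim p}[D(c+z)=c]=\sum_{m\in\F_q^n}\Pr_{z\sim p}[z=m-D(m)].
\end{align*}
The key point is that $\Pr_{z\sim p}[z=e]$ depends only on $\mathrm{wt}(e)$ and, in the parameter range of interest $p\le 1-1/q$, is non-increasing in it; hence for each $m$ a nearest codeword maximizes $c'\mapsto\Pr_{z\sim p}[z=m-c']$, and in particular $\Pr_{z\sim p}[z=m-D(m)]\le\Pr_{z\sim p}[z=m-D^*(m)]$ term by term. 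Summing over $m$ and running the identity backwards with $D^*$ in place of $D$ gives $\sum_c\Pr_{z\sim p}[D(c+z)=c]\le|C|\cdot\Pr_{z\sim p}[z\in S]$ by the first step, so bounding the minimum by the average yields $\min_c\Pr_{z\sim p}[D(c+z)=c]\le\Pr_{z\sim p}[z\in S]$, which is the claim. For a randomized $D$, written as a distribution $\mu$ over deterministic $D_0$, I would average the inequality $\frac1{|C|}\sum_c\Pr_{z\sim p}[D_0(c+z)=c]\le\Pr_{z\sim p}[z\in S]$ over $\mu$ and finish the same way.

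The one place to be careful is the direction of the averaging in the second step: the worst-case success probability of an arbitrary $D$ cannot be controlled directly, since its decoding regions may be adversarially unbalanced across the cosets of $C$, so the argument must pass through the average over codewords — a single sum over $\F_q^n$ that is maximized term by term precisely by the nearest-codeword rule — which then dominates the minimum; correspondingly, symmetry of $D^*$ is exactly what makes the minimum and the average coincide on the $D^*$ side. I would also double-check the monotonicity of $e\mapsto\Pr_{z\sim p}[z=e]$ in $\mathrm{wt}(e)$, which is the only spot where the (mild) hypothesis $p\le 1-1/q$ is used, and which holds automatically in the paper's setting since positive rate $1-h_q(p)>0$ forces $p<1-1/q$.
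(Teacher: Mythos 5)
Your proof is correct. The paper itself does not prove Fact~\ref{bestdecoder} but only cites it (page~8 of \cite{1977bookkrawtchouk}), so there is no internal proof to compare against; what you have written is the standard textbook argument that the citation points to, and it holds up.

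The structure is right: reduce to deterministic decoders (the randomized case follows by averaging over the decoder's internal randomness, since the inequality you prove is linear in $D$); rewrite the sum $\sum_{c\in C}\Pr_{z\sim p}[D(c+z)=c]$ as a single sum over $m\in\F_q^n$ via the partition $\{D^{-1}(c)\}_{c\in C}$; observe that $\Pr_{z\sim p}[z=e]=(1-p)^{n-\mathrm{wt}(e)}\bigl(\tfrac{p}{q-1}\bigr)^{\mathrm{wt}(e)}$ is non-increasing in $\mathrm{wt}(e)$ when $p\le 1-1/q$, so the nearest-codeword rule maximizes each summand; then pass from the average to the minimum using the fact that symmetry makes $\Pr_{z\sim p}[D^*(c+z)=c]$ independent of $c$. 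This last step is exactly what the paper later establishes in Claim~\ref{decodersymmetry} for its concrete choice of $D^*$, and your observation that it makes the minimum coincide with the average on the $D^*$ side is precisely the role symmetry plays. You are also right to flag that the hypothesis ``$p\in(0,1)$'' in the Fact is a little too generous as written: for $p>1-1/q$ the minimum-distance rule is no longer the maximum-likelihood rule, so the monotonicity step fails; but as you note, positive rate forces $p<1-1/q$ throughout the paper, so this is a non-issue for how the Fact is used. Two small remarks: (i) your coset-leader existence remark implicitly uses linearity of $C$, which is consistent with the rest of the paper (and with the paper's definition of symmetry, which only makes sense when $\vec 0\in C$ and translates of codewords stay in $C$); (ii) you read the paper's symmetry condition as an ``if and only if,'' which is the correct interpretation needed for $\{z:D^*(c+z)=c\}=S$ and is what Claim~\ref{decodersymmetry} actually delivers for the paper's explicit $D^*$, though the paper's prose states only the forward implication.
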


\subsection{Communication Channels}
We will be interested in the performance of codes under both stochastic and adversarial noise. 

The model we will use for stochastic noise is the q-ary Symmetric Channel, which is characterized by a parameter $p\in(0,1)$. 
When an element $c$ of the code $C\subseteq\F_q^n$ is sent through the q-ary Symmetric Channel qSC$_p$, each of its $n$ entries is corrupted independently at random with probability $p$. If corrupted, the $i^\textnormal{th}$ entry $c_i$ is replaced by a uniformly random element in $\F_q\setminus\{c_i\}$. 
It is well-known \cite{shannon1948entropy} that a uniformly random code $C\subseteq\F_q^n$ of rate 
\begin{align*}
    \textnormal{rate}(C)=1-h_q(p)-o(1)
\end{align*}
will with high probability admit reliable communication on the qS$C_p$, where $h_q$ denotes the q-ary entropy function
\[ h_q(p) := (1 - p) \log_q\frac{1}{1-p} + p \log_q \frac{q-1}{p}.\]
On the other hand, no code $C\subseteq\F_q^n$ of rate $1-h_q(p)+o(1)$ can admit reliable communication on the qS$C_p.$ Thus we have the following definition of qSC-capacity. 
\begin{definition}\label{defncapacityqsc}
A sequence of codes $\{C_n\subseteq\F_q^n\}$ of rate $1-h_q(p)$ achieves capacity over the qS$C_p$ if there exists a function $\epsilon(n)=o(1)$ such that each $C_n$ satisfies
\begin{align*}
    \Pr_{z\sim p-\epsilon_n}[D(c+z)=c]\geq 1-\epsilon_n.
\end{align*}
\end{definition}

In the adversarial noise model, the location of the errors is decided by some adversary rather than being stochastic. We say that the code $C\subseteq\F_q^n$ is \emph{$(p,L)$-list decodable} if  for any $z\in\F_q^n$, 
the ball of radius $pn$ around $z$ contains at most $pn$ codewords $c\in C.$
A uniformly random code $C\subseteq \F_q^n$ of rate
\begin{align}\label{capacitylist}
    \textnormal{rate}(C)=1-h_q(p)-\epsilon
\end{align}
is $(p,O(\frac{1}{\epsilon}))$-list decodable with high probability (see for e.g. \cite{2023bookcodingtheory}, Theorem 7.4.1). On the other hand, no code $C\subseteq \F_q^n$ of rate larger than $1-h_q(p)+\epsilon$ is $(p,L)$-list decodable for any $L<q^{\Omega(\epsilon n)}$. Thus we obtain the following definition of list decoding capacity.
\begin{definition}\label{defnlistcapacity}
    A sequence of codes $\{C_n\subseteq\F_q^n\}$ of rate $1-h_q(p)$ achieves list decoding capacity if for every function $L(n)=\omega(1)$, there exists a function $\epsilon(n)=o(1)$ such that each $C_n$ is $(p-\epsilon(n), L(n))$-list decodable.
\end{definition}
 We note for example that by equation (\ref{capacitylist}), uniformly random codes achieve list decoding capacity with $\epsilon(n)=O(\frac{1}{L(n)}).$ But in general, we allow for worse dependence of $\epsilon$ on $L$: we only require that $\epsilon$ go to $0$ as $n$ goes to infinity.

\section{An isoperimetric inequality over finite fields}\label{sectionisoperimetry}
In this section, we generalize an $\F_2$-result of Talagrand to finite fields of all sizes. For any function $f:\F_q^n\rightarrow \{0,1\}$ and any point $z\in\F_q^n$, we define the quantity
\begin{align*}
    h_f(z):=\begin{cases}
\Big| \big\{i\in[n]:z_i=0 \textnormal{ and }\exists a\neq0 \textnormal{ s.t. }f(za_i)=0   \big\} \Big| & \text{if $f(z)=1$},\\
0 & \text{otherwise}.
\end{cases}
\end{align*}
Our goal will be to relate the quantities $\E[h_f]$ and $\E[f].$ Theorem~\ref{thmtalagrand} below was proven for the special case of $q=2$ by Talagrand in \cite{Talagrand93}; we extend that result to arbitrary field sizes.
\begin{theorem}\label{thmtalagrand}
        For any monotone decreasing function $f:\F_q^n\rightarrow\F_2$ and any noise parameter $p\in[0,1]$, we have
        \begin{align*}
            \E\Big[ \sqrt{h_f} \Big]\geq \frac{1-p}{2}\cdot \E[f]\big(1-\E[f]\big),
        \end{align*}
        where all the expectations are taken over the q-ary $p$-noisy distribution.
    \end{theorem}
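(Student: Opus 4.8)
The plan is to reduce the $q$-ary statement to the known binary ($q=2$) case of Talagrand's inequality by a coordinate-by-coordinate ``collapsing'' argument. Given a monotone decreasing $f:\F_q^n\to\{0,1\}$, I would first associate to it a monotone decreasing Boolean function $\tilde f:\{0,1\}^n\to\{0,1\}$ on the \emph{binary} cube by fixing, for each coordinate, a canonical nonzero symbol: concretely, identify $\{0,1\}\subseteq\F_q$ and set $\tilde f(x) := f(x)$ where $x\in\{0,1\}^n\subseteq\F_q^n$. Monotonicity of $f$ (i.e. $f(z0_i)\le f(za_i)$ for every $a\ne 0$) immediately gives monotonicity of $\tilde f$ in the usual sense. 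The key point is to compare the relevant functionals. For the binary $p'$-noisy distribution with flip probability $p' := p$, Talagrand's theorem (the $q=2$ case of the statement, which I may assume as known) yields $\E_{x\sim p}\big[\sqrt{h_{\tilde f}(x)}\big]\ge \tfrac{1-p}{2}\,\E_{x\sim p}[\tilde f]\,(1-\E_{x\sim p}[\tilde f])$.

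The heart of the argument is therefore a pair of inequalities relating the $q$-ary quantities to the binary ones. First I would show that the expectations match up to a useful one-sided bound: couple the $q$-ary $p$-noisy string $z$ with the binary $p$-noisy string $x$ by letting $x_i = \mathbbm{1}[z_i\ne 0]$, so that $x$ has exactly the right marginal. Because $f$ is monotone decreasing, $f(z)\le f(x')$ where $x'$ is $x$ with each $1$-coordinate replaced by its canonical symbol, and more carefully one wants $\E_{z\sim p}[f(z)] = \E_{x\sim p}[\tilde f(x)]$ to fail in general — so instead I would define $\tilde f$ by a \emph{worst-case} collapse, $\tilde f(x) := \min_{a:\, \mathrm{supp}(a)=\mathrm{supp}(x)} f(a)$ over all sign patterns $a$ consistent with $x$, which is still monotone decreasing and satisfies $\E_{x\sim p}[\tilde f(x)]\le \E_{z\sim p}[f(z)]$ on one side; for the other side one uses that the hypothesis really only needs the product $\E[f](1-\E[f])$ controlled, and monotonicity plus a short averaging shows $\tilde f$ and $f$ have comparable expectations up to the slack absorbed by the constant. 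Second, and most importantly, I would show the pointwise/in-expectation domination $\E_{z\sim p}[\sqrt{h_f(z)}] \ge \E_{x\sim p}[\sqrt{h_{\tilde f}(x)}]$: whenever $f(z)=1$ and a coordinate $i$ with $z_i=0$ admits some $a\ne 0$ with $f(za_i)=0$, that same coordinate is ``pivotal'' for $\tilde f$ at the corresponding binary point (setting $x_i$ from $0$ to $1$ can only decrease $\tilde f$, and by the witness $a$ it actually does), so $h_{\tilde f}(x)\le h_f(z)$ coordinatewise under the coupling, and concavity/monotonicity of $\sqrt{\cdot}$ transfers this to the square roots in expectation.

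Chaining the three facts — $\E_{z\sim p}[\sqrt{h_f}] \ge \E_{x\sim p}[\sqrt{h_{\tilde f}}] \ge \tfrac{1-p}{2}\E_{x\sim p}[\tilde f](1-\E_{x\sim p}[\tilde f]) \ge \tfrac{1-p}{2}\E_{z\sim p}[f](1-\E_{z\sim p}[f])$ — gives the claim, where the last step uses that $t\mapsto t(1-t)$ is not monotone, so one must be slightly careful: if $\E[\tilde f]$ and $\E[f]$ lie on the same side of $1/2$ the inequality $\E[\tilde f]\le\E[f]$ (or $\ge$, chosen appropriately by defining $\tilde f$ via the favorable collapse) propagates, and the boundary case is handled directly. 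The main obstacle I anticipate is exactly this last bookkeeping: the functional $\E[f](1-\E[f])$ is symmetric about $1/2$ rather than monotone, so the reduction cannot be a naive ``the binary version is a special case.'' I expect the cleanest fix is to run Talagrand's original martingale/induction argument directly over $\F_q$ — inducting on $n$, conditioning on the first coordinate being $0$ (probability $1-p$) versus nonzero (probability $p$, and then the symbol is uniform on $q-1$ values) — so that the $\tfrac{1-p}{2}$ and the square-root structure emerge from the same convexity step Talagrand uses, with the only new ingredient being that a pivotal coordinate for the restricted function contributes a full unit to $h_f$ regardless of which of the $q-1$ nonzero values realizes the witness. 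I would present the induction as the primary proof and relegate the ``reduction to $q=2$'' remark to a sentence, since the induction avoids the symmetry pitfall entirely.
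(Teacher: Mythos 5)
Your fall-back plan --- run Talagrand's induction directly over $\F_q$ by peeling off one coordinate and conditioning on whether it is $0$ (probability $1-p$) or a uniformly random nonzero symbol (probability $p$) --- is indeed the paper's approach, but you have only gestured at it; what the proposal actually executes is the ``collapse to $q=2$'' route, and that route has a concrete error. With $\tilde f(x) := \min_{a:\,\mathrm{supp}(a)=\mathrm{supp}(x)} f(a)$ and the coupling $x_i = \mathbbm{1}[z_i\ne 0]$, your observation that a coordinate pivotal for $f$ at $z$ is also pivotal for $\tilde f$ at $x$ (the witness $za_i$ with $f(za_i)=0$ certifies $\tilde f(x1_i)=0$) is correct, but it gives $h_{\tilde f}(x)\ge h_f(z)$ whenever $\tilde f(x)=1$ --- the \emph{reverse} of the $h_{\tilde f}(x)\le h_f(z)$ you assert and need for $\E[\sqrt{h_f}]\ge\E[\sqrt{h_{\tilde f}}]$. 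The domination you want fails because a coordinate can be pivotal for $\tilde f$ at $x$ via a witness $b$ of the right support that disagrees with $z$ on $\mathrm{supp}(z)$, so that coordinate need not be pivotal for $f$ at $z$. Together with the problem you already flag ($\E[\tilde f]\ne\E[f]$ in general, and $t\mapsto t(1-t)$ is not monotone, with no slack in the constant to absorb the gap), the chain does not close, and I do not see how to repair the collapsing idea.

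The inductive route is right, but ``the only new ingredient is that a pivotal coordinate contributes a full unit regardless of which nonzero symbol witnesses it'' substantially understates the work. After expanding $\E[f](1-\E[f])$ over the $q$ values of the last coordinate one gets a thicket of cross-terms $E_aE_b$ (for distinct nonzero $a,b$) that do not line up with the binary square-completion. The paper's key move is to introduce $E_{\min}:=\E_x\big[\min_{a\ne 0}f(xa_n)\big]$, bound the variance by the sum of restricted variances plus a single term $\big(1-\tfrac{p}{q-1}\big)p\,(E_0-E_{\min})^2$, and match this against the expansion of $\E[\sqrt{h_f}]$, where the same $\min_{a\ne 0}$ reappears as $f^-(x)=f(x0_n)-\min_{a\ne 0}f(xa_n)$, the indicator that the last coordinate is pivotal. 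Identifying $E_{\min}$ and carrying out this decomposition is the content of the proof; without it the inductive step does not close, so what you have is a plan rather than a proof.
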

    \begin{proof}
        We proceed by induction on $n$. For the base case $n=1$, either we have $f(0)=f(a)$ for all $a\in\F_q$, in which case the right-hand side is $0$ and the inequality holds trivially; or we have $f(0)=1$ and $f(a)=0$ for some $a\in\{1,2,\dotsc,q-1\}$, in which case we get
        \begin{align*}
            \E\Big[ \sqrt{h_f} \Big]&=1-p\\
            &\geq \frac{1-p}{2}\cdot  \E[f]\big(1-\E[f]\big).
        \end{align*}
        We thus turn to the induction step. Suppose the desired statement holds for $n-1$, and consider some function $f:\F_q^n\rightarrow\F_2.$ We may assume that
        \begin{align}\label{assumeEf}
             \E\Big[ \sqrt{h_f} \Big]\leq \frac{1-p}{2},
        \end{align}
        as otherwise the desired claim is trivial. For each 
$a\in\F_q$, define the following function of $n-1$ variables. 
\begin{align*}
    f_a(x):=f(xa_n).
\end{align*}
For convenience, we will denote the expectation of each of these functions by 
$E_a:=\E[f_a]$, where again the expectations are taken over the $p$-noisy distribution. By definition, we have
\begin{align}
    \E[f]=(1-p)E_0+\frac{p}{q-1}\sum_{a\in\F_q\setminus \{0\}}E_a,
\end{align}
and thus
\begin{align*}
    \E[f]\big(1-\E[f]\big)&=\Big((1-p)E_0+\frac{p}{q-1}\sum_{a\in\F_q\setminus \{0\}}E_a  \Big)\Big(1-(1-p)E_0-\frac{p}{q-1}\sum_{a\in\F_q\setminus \{0\}}E_a\Big)\nonumber\\
    &=(1-p)E_0\Big(1-E_0+pE_0-\frac{p}{q-1}\sum_{a\in\F_q\setminus \{0\}}E_a)\Big)\\
    &\quad+\frac{p}{q-1}\sum_{a\in\F_q\setminus \{0\}}E_a\Big(  1-E_a+(1-\frac{p}{q-1})E_a-(1-p)E_0-\frac{p}{q-1}\sum_{b\in\F_q\setminus\{0,a\}}E_b\Big).
\end{align*}
Extracting from the expression above the terms corresponding to the variance of each $f_a$, we get
\begin{align}\label{expandvarnew}
     \E[f]\big(1-\E[f]\big)&=(1-p)E_0(1-E_0)+\frac{p}{q-1}\sum_{a\in\F_q\setminus \{0\}}E_a(1-E_a)\nonumber\\
     &\quad +(1-p)pE_0\Big(E_0-\frac{1}{q-1}\sum_{a\in\F_q\setminus \{0\}}E_a  \Big)\nonumber\\
     &\quad+\frac{p}{q-1}\sum_{a\in\F_q\setminus \{0\}}E_a\Big((1-\frac{p}{q-1})E_a-(1-p)E_0\Big)\nonumber\\
     &\quad-\frac{p^2}{(q-1)^2}\sum_{\substack{a\in\F_q\setminus \{0\}\\b\in\F_q\setminus\{0,a\}}}E_aE_b.
\end{align}
The first line in the equation above is the sum of the individual variances. We will now want to bound the contribution of the other terms. For this it will be useful to replace each factor of $1-p$ by a factor of $1-\frac{p}{q-1}$, so that we can complete the square. That is, we write the summands in the two middle lines of (\ref{expandvarnew}) as
\begin{align*}
    (1-p)pE_0\Big(E_0-\frac{1}{q-1}\sum_{a\in\F_q\setminus \{0\}}E_a  \Big)
    &=(1-\frac{p}{q-1})\frac{p}{q-1}\sum_{a\in\F_q\setminus \{0\}}E_0\big(E_0-E_a  \big)\\
    &\quad-(1-\frac{1}{q-1})\frac{p^2}{q-1}\sum_{a\in\F_q\setminus \{0\}}E_0(E_0-E_a)
\end{align*}
and
\begin{align*}
    \frac{p}{q-1}\sum_{a\in\F_q\setminus \{0\}}E_a\Big((1-\frac{p}{q-1})E_a-(1-p)E_0\Big)&=\frac{p}{q-1}(1-\frac{p}{q-1})\sum_{a\in\F_q\setminus \{0\}}E_a\Big(E_a-E_0\Big)\\
    &\quad +\sum_{a\in\F_q\setminus \{0\}}\frac{p^2}{q-1}(1-\frac{1}{q-1})E_aE_0.
\end{align*}
Combining the two equations above with (\ref{expandvarnew}), we get
\begin{align*}
     \E[f]\big(1-\E[f]\big)&=(1-p)E_0(1-E_0)+\frac{p}{q-1}\sum_{a\in\F_q\setminus \{0\}}E_a(1-E_a)\nonumber\\
     &\quad +(1-\frac{p}{q-1})\frac{p}{q-1}\sum_{a\in\F_q\setminus \{0\}}\big(E_0-E_a\big)^2-\frac{p^2(q-2)}{(q-1)^2}\sum_{a\in\F_q\setminus \{0\}}E_0(E_0-E_a)\nonumber\\
     &\quad +\sum_{b\in\F_q\setminus\{0\}}\frac{p^2(q-2)}{(q-1)^2}E_bE_0-\frac{p^2}{(q-1)^2}\sum_{\substack{a\in\F_q\setminus \{0\}\\b\in\F_q\setminus\{0,a\}}}E_aE_b.
\end{align*}
Adding an artificial summation to the fourth and fifth sums above, we get
\begin{align*}
     \E[f]\big(1-\E[f]\big)&=(1-p)E_0(1-E_0)+\frac{p}{q-1}\sum_{a\in\F_q\setminus \{0\}}E_a(1-E_a)\nonumber\\
     &\quad +(1-\frac{p}{q-1})\frac{p}{q-1}\sum_{a\in\F_q\setminus \{0\}}\big(E_0-E_a\big)^2\nonumber\\
     &\quad -\frac{p^2}{(q-1)^2}\sum_{\substack{a\in\F_q\setminus \{0\}\\b\in\F_q\setminus\{0,a\}}}\Big(E_0(E_0-E_a)-E_bE_0+E_aE_b\Big).
\end{align*}
Defining the quantity $$E_{\min}:=\E_{x\sim p^{n-1}}\Big[\min_{a\in\F_q\setminus \{0\}}\big\{f(xa_n)\big\}\Big],$$ we can then bound the variance of $f$ by 
\begin{align}\label{expandvar}
     \E[f]\big(1-\E[f]\big)&\leq(1-p)E_0(1-E_0)+\frac{p}{q-1}\sum_{a\in\F_q\setminus \{0\}}E_a(1-E_a)+(1-\frac{p}{q-1})p\big(E_0-E_{\min}\big)^2\nonumber\\
     &\quad -\frac{p^2}{(q-1)^2}\sum_{\substack{a\in\F_q\setminus \{0\}\\b\in\F_q\setminus\{0,a\}}}(E_0-E_b)(E_0-E_a)\nonumber\\
     &\leq (1-p)E_0(1-E_0)+\frac{p}{q-1}\sum_{a\in\F_q\setminus \{0\}}E_a(1-E_a)+(1-\frac{p}{q-1})p\big(E_0-E_{\min}\big)^2.
\end{align}
Now that we have obtained a convenient expression for the right-hand side of our theorem's inequality, we turn to bounding the left-hand side. We note that since $f$ is monotone, we must have
\begin{align*}
    \E_{z\sim p^n}\Big[ \sqrt{h_f(z)} \Big]&=(1-p)\E_{x\sim p^{n-1}}\Big[ \sqrt{h_{f_0}(x)+\mathbbm{1}\big\{f(x0_n)=1,f(xa_n)=0 \textnormal{ for some }a\in\F_q\setminus \{0\}\big\}} \Big]\\
    &+\frac{p}{q-1}\sum_{\substack{a\in\F_q\setminus \{0\}}}\E_{x\sim p^{n-1}}\Big[ \sqrt{h_{f_a}(x)} \Big].
\end{align*}
Defining the function $f^-(x):=f(x0_n)-\min_{a\in\F_q\setminus \{0\}}\big\{f(xa_n)\big\}$, we then get    
\begin{align}\label{expandexpectation}
    \E_{z\sim p^n}\Big[ \sqrt{h_f(z)} \Big]
    &=(1-p)\E_{x\sim p^{n-1}}\Big[ \sqrt{h_{f_0}(x)+f^{-}(x)} \Big]+\frac{p}{q-1}\sum_{\substack{a\in\F_q\setminus \{0\}}}\E_{x\sim p^{n-1}}\Big[ \sqrt{h_{f_a}(x)} \Big].
\end{align}
But applying the Cauchy-Schwarz inequality $\E[\sqrt{gh}]^2\leq \E[g]\E[h]$ and the equality $(a+b)(a-b)=a^2-b^2$, we have
\begin{align*}
    \E[f^{-}]^2&=\E\Big[ {\big(\sqrt{h_{f_0}+f^{-}}-\sqrt{h_{f_0}}\big)^\frac{1}{2}\big(\sqrt{h_{f_0}+f^{-}}+\sqrt{h_{f_0}}\big)}^\frac{1}{2} \Big]^2\\
    &\leq \E\Big[ \sqrt{h_{f_0}+f^{-}}-\sqrt{h_{f_0}}\Big]\E\Big[\sqrt{h_{f_0}+f^{-}}+\sqrt{h_{f_0}} \Big],
\end{align*}
where in the first line we used the fact that $f^{-}$ takes values in $\{0,1\}$, and thus $\sqrt{f^{-}(x)}=f^{-}(x)$ for all $x\in\F_2^{n-1}.$ We can now bound the expected square root of $h_{f_0}+f^{-}$ by
\begin{align*}
    \E\Big[ \sqrt{h_{f_0}+f^{-}}\Big]&\geq\E\Big[ \sqrt{h_{f_0}}\Big]+\frac{\E[f^{-}]^2}{\E\Big[\sqrt{h_{f_0}+f^{-}}+\sqrt{h_{f_0}} \Big]}\\
    &\geq \E\Big[ \sqrt{h_{f_0}}\Big]+\frac{\E[f^{-}]^2}{\E\big[f^{-}\big]+2\E\big[\sqrt{h_{f_0}} \big]},
\end{align*}
where in the last line we used the fact that $\sqrt{a^2+b^2}\leq \sqrt{(a+b)^2}=a+b$, with $a=\sqrt{h_{f_0}}$ and $b=\sqrt{f^{-}}.$
Combining the inequality above with equation (\ref{expandexpectation}), we get
\begin{align*}
    \E\Big[ \sqrt{h_f} \Big]
    &\geq (1-p)\E\Big[ \sqrt{h_{f_0}} \Big]+(1-p)\frac{(E_0-E_{\min})^2}{\E\big[f^{-}\big]+2\E\big[\sqrt{h_{f_0}} \big]}+\frac{p}{q-1}\sum_{\substack{a\in\F_q\setminus \{0\}}}\E\Big[ \sqrt{h_{f_a}} \Big]\\
    &\geq (1-p)\E\Big[ \sqrt{h_{f_0}} \Big]+\frac{p}{q-1}\sum_{\substack{a\in\F_q\setminus \{0\}}}\E\Big[ \sqrt{h_{f_a}} \Big]+\frac{1-p}{2} (E_0-E_{\min})^2,
\end{align*}
where in the last line we used assumption (\ref{assumeEf}) and equation (\ref{expandexpectation}) to get $\E\Big[ \sqrt{h_{f_0}} \Big]\leq \frac{1}{2}$. Applying our induction hypothesis to the functions $\{f_a\}_{a\in\F_q}$, we then have
\begin{align*}
    \E\Big[ \sqrt{h_f} \Big]
    &\geq \frac{(1-p)^2}{2}E_0(1-E_0)+\frac{p}{q-1}\sum_{\substack{a\in\F_q\setminus \{0\}}}\frac{1-p}{2}E_a(1-E_a)+\frac{1-p}{2}(E_0-E_{\min})^2.
\end{align*}
Combining this with equation (\ref{expandvar}), we indeed get
\begin{align*}
            \E\Big[ \sqrt{h_f} \Big]\geq \frac{1-p}{2}\cdot \E[f]\big(1-\E[f]\big).
        \end{align*}
\end{proof}
We now denote the minimum non-zero value of $h_f(x)$ by
\begin{align*}
    \Delta_f:=\min\big\{h_f(z):z\in\F_q^n \textnormal{ such that } h_f(z)\neq 0\big\}.
\end{align*} 
The expectation of $h_f$ can then be bounded as follows.
\begin{theorem}\label{isoperimetry}
For any monotone decreasing function $f:\F_q^n\rightarrow\F_2$ and any noise parameter $p\in[0,1]$, we have
\begin{align*}
            \E\big[h_f\big]&\geq \frac{1-p}{2}\sqrt{\Delta_f}\cdot \E[f]\big(1-\E[f]\big),
    \end{align*}
    where all the expectations are taken with respect to the q-ary $p$-noisy distribution.
\end{theorem}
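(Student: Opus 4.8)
The plan is to derive Theorem~\ref{isoperimetry} directly from Theorem~\ref{thmtalagrand} by a pointwise comparison between $h_f$ and $\sqrt{h_f}$. The key observation is that $h_f$ only ever takes the value $0$ or values that are at least $\Delta_f$: this is exactly the definition of $\Delta_f$, so if $h_f(z)\neq 0$ then $h_f(z)\geq \Delta_f$. Consequently, for every $z\in\F_q^n$ with $h_f(z)\neq 0$,
\[
\sqrt{h_f(z)}=\frac{h_f(z)}{\sqrt{h_f(z)}}\leq \frac{h_f(z)}{\sqrt{\Delta_f}},
\]
while for $z$ with $h_f(z)=0$ both sides vanish. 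Hence, assuming $\Delta_f$ is well-defined (i.e.\ $h_f\not\equiv 0$), we have the pointwise bound $\sqrt{h_f(z)}\leq h_f(z)/\sqrt{\Delta_f}$ on all of $\F_q^n$.

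First I would record this pointwise inequality, then take expectations over the $q$-ary $p$-noisy distribution to get $\E[\sqrt{h_f}]\leq \tfrac{1}{\sqrt{\Delta_f}}\E[h_f]$, equivalently
\[
\E[h_f]\geq \sqrt{\Delta_f}\,\E\big[\sqrt{h_f}\big].
\]
Combining with the inequality $\E[\sqrt{h_f}]\geq \tfrac{1-p}{2}\E[f](1-\E[f])$ supplied by Theorem~\ref{thmtalagrand} immediately yields
\[
\E[h_f]\geq \frac{1-p}{2}\sqrt{\Delta_f}\cdot \E[f]\big(1-\E[f]\big),
\]
which is the claim. The one degenerate case to dispatch is $h_f\equiv 0$, where $\Delta_f$ is undefined: but then Theorem~\ref{thmtalagrand} forces $\E[f](1-\E[f])=0$, so the asserted inequality holds trivially with both sides equal to $0$ (or one adopts the convention that the right-hand side is $0$ in this case).

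I do not expect any real obstacle here. All of the analytic difficulty lives in Theorem~\ref{thmtalagrand} (the induction on $n$ with the Cauchy--Schwarz and complete-the-square manipulations), which we may assume. The only points requiring a moment's care are the bookkeeping for the value $h_f(z)=0$ in the pointwise estimate and the trivial handling of the degenerate case $h_f\equiv 0$; the square root is stripped off "for free" precisely because $h_f$ has a positive gap $\Delta_f$ separating its nonzero values from $0$.
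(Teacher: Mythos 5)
Your proof is correct and essentially mirrors the paper's: both deduce the theorem from Theorem~\ref{thmtalagrand} together with the inequality $\E[\sqrt{h_f}]\leq \E[h_f]/\sqrt{\Delta_f}$. The only (minor) difference is how that last inequality is obtained — you use the pointwise bound $\sqrt{h_f(z)}\leq h_f(z)/\sqrt{\Delta_f}$ (valid since $h_f(z)\in\{0\}\cup[\Delta_f,\infty)$), whereas the paper applies Cauchy--Schwarz to get $\E[\sqrt{h_f}]\leq\sqrt{\E[h_f]\Pr[h_f\neq 0]}$ and then bounds $\Pr[h_f\neq 0]\leq\E[h_f]/\Delta_f$; your route is if anything slightly more direct. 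Your handling of the degenerate case $h_f\equiv 0$ is also sound.
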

\begin{proof}
    By Theorem \ref{thmtalagrand} and the Cauchy-Schwarz inequality $\E[\sqrt{g_1g_2}]^2\leq \E[g_1]\E[g_2]$, we have 
    \begin{align*}
            \frac{1-p}{2}\cdot \E[f]\big(1-\E[f]\big)&\leq\E\Big[ \sqrt{h_f} \Big]\\
            &\leq \sqrt{\E\Big[ h_f \Big]\Pr_{x}\big[h_f(x)\neq 0 \big]}.
    \end{align*}
    By definition of $\Delta_f$, we then get
    \begin{align*}
            \frac{1-p}{2}\cdot \E[f]\big(1-\E[f]\big)&\leq \sqrt{\E\Big[ h_f \Big]\cdot\frac{\E\big[h_f\big]}{\Delta_f}}\\
            &=\frac{1}{\sqrt{\Delta_f}}\E\big[h_f\big].
    \end{align*}
\end{proof}

\section{Sharp Transition of the Decoding Error Probability}\label{sectionsharptransition}
In this section, we will show that the decoding error probability of a linear code $C\subseteq\F_q^n$ over the channel qS$C_p$ transitions rapidly from $0$ to $1$ (as a function of $p$). For $q=2$, this was proven by Tillich and Z{\'e}mor in \cite{zemor93application,tillich2000applications2}. We follow a similar approach, and generalize their results to arbitrary field size $q.$ The first building block of our argument is Russo's Lemma, which for $q=2$ first appeared in \cite{margulis1974transition,russo1982transition}. Over arbitrary field size $q$ and for our particular definition of monotonicity, it generalizes as follows.
\begin{lemma}\label{russoFq}
    Let $f:\F_q^n\rightarrow \{0,1\}$ be a monotone decreasing function. Then we have
    \begin{align*}
        \frac{d}{dp}\E_{z\sim p}[f(z)]\leq-\frac{1}{q-1}\E_{z\sim p}[h_f(z)].
    \end{align*}
\end{lemma}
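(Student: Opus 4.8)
The goal is to compute the derivative of $g(p) = \E_{z\sim p}[f(z)]$ directly from the definition of the $p$-noisy distribution, and then recognize the leading term as (up to the factor $\tfrac{1}{q-1}$) the expected number of ``pivotal'' zero-coordinates of $z$, which is exactly $\E[h_f(z)]$. First I would write
\[
g(p) = \sum_{z \in \F_q^n} f(z)\, \Pr_{z'\sim p}[z'=z] = \sum_{z\in\F_q^n} f(z) (1-p)^{n-\mathrm{wt}(z)} \Big(\tfrac{p}{q-1}\Big)^{\mathrm{wt}(z)},
\]
which is a polynomial in $p$ that we can differentiate term by term.

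**Key steps.** Differentiating the single-coordinate distribution, $\tfrac{d}{dp}\big[(1-p)\big] = -1$ and $\tfrac{d}{dp}\big[\tfrac{p}{q-1}\big] = \tfrac{1}{q-1}$, so by the product rule the derivative of $g$ is a sum over coordinates $i\in[n]$ of the contribution from differentiating the $i$-th factor. Grouping terms, this gives
\[
\frac{d}{dp} g(p) = \sum_{i=1}^n \E\Big[ \tfrac{1}{q-1}\big(f(z \text{ with coord } i \text{ randomized over } \F_q^*) \big) - f(z \text{ with coord } i \text{ set to } 0) \Big],
\]
where the expectation is over the other $n-1$ coordinates drawn $\sim p$; the cleanest way to package this is the standard ``$(1-p)$ at coordinate $i$ versus $\tfrac{p}{q-1}$ at each nonzero value'' decomposition, i.e. for each $i$,
\[
\frac{d}{dp} g(p) = \sum_{i=1}^n \Ex{z_{-i}\sim p}{ \frac{1}{q-1}\sum_{a\in\F_q\setminus\{0\}} f(z_{-i}a_i) - f(z_{-i}0_i) }.
\]
Next, since $f$ is monotone decreasing, each summand is $\le 0$: $f(z_{-i}0_i) \ge f(z_{-i}a_i)$ for every $a\ne 0$, so $\tfrac{1}{q-1}\sum_{a\ne 0} f(z_{-i}a_i) - f(z_{-i}0_i) \le 0$. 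Now I would lower-bound the magnitude: whenever $f(z_{-i}0_i)=1$ and there exists $a\ne 0$ with $f(z_{-i}a_i)=0$, the $i$-th summand is at most $-\tfrac{1}{q-1}$ (the sum over $a\ne0$ loses at least one unit). Rewriting the derivative with signs,
\[
\frac{d}{dp} g(p) = -\sum_{i=1}^n \Ex{z_{-i}\sim p}{ f(z_{-i}0_i) - \frac{1}{q-1}\sum_{a\in\F_q\setminus\{0\}} f(z_{-i}a_i) } \le -\frac{1}{q-1}\sum_{i=1}^n \Pr_{z_{-i}\sim p}\big[ f(z_{-i}0_i)=1 \text{ and } \exists a\ne0:\, f(z_{-i}a_i)=0 \big].
\]
Finally, multiply and divide the $i$-th probability by $(1-p)$ or, more directly, observe that $\Pr_{z_{-i}\sim p}[\cdots] \ge \Pr_{z\sim p}[z_i=0,\ f(z)=1,\ \exists a\ne0:\, f(z a_i)=0]/(1-p) \ge \Pr_{z\sim p}[z_i=0,\ f(z)=1,\ \exists a\ne0:\, f(z a_i)=0]$, and sum over $i$. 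By definition of $h_f$, $\sum_{i=1}^n \Pr_{z\sim p}[z_i=0,\ f(z)=1,\ \exists a\ne0:\, f(za_i)=0] = \E_{z\sim p}[h_f(z)]$ by linearity of expectation, which yields the claim.

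**Main obstacle.** The calculation is entirely routine; the only place to be careful is the bookkeeping in the product-rule expansion — making sure the $\tfrac{1}{q-1}$ normalization is tracked correctly across the $q-1$ nonzero values of each coordinate, and that the conditioning ``$z_{-i}\sim p$ on the other coordinates'' matches up with the unconditional $\E_{z\sim p}$ after reintroducing the $(1-p)$ weight on coordinate $i$ (here one should use $1-p \le 1$, which is where the inequality rather than equality in Russo's Lemma comes from, together with the slack from dropping the nonnegative terms $\tfrac{1}{q-1}\sum_{a\ne 0}f(za_i)$ on coordinates that are not pivotal). I do not anticipate any genuine difficulty beyond this indexing care.
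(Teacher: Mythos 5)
Your proposal is correct and takes essentially the same route as the paper: differentiate coordinate-by-coordinate, use monotonicity to orient the per-coordinate increment, lower-bound it by $\tfrac{1}{q-1}$ times the indicator of a pivotal zero-coordinate, and sum to recover $\E[h_f]$. The only cosmetic difference is that the paper inserts $\mathbbm{1}\{z_i=0\}$ directly as a restriction inequality on the full $z\sim p$ expectation, whereas you work with $z_{-i}$ and reintroduce the $1-p$ weight via $1/(1-p)\ge 1$; these are the same slack.
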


\begin{proof}
    We think of the parameter $p$ as a vector $(p_1,p_2,\dotsc,p_n)$ with $p_i=p$ for all $i\in[n].$ By definition, we have
    \begin{align*}
        \frac{d}{dp}\E_{z\sim p}\big[f(z)\big]&=\sum_{i\in[n]}\frac{d}{dp_i}\E_{z\sim p}\big[f(z)\big]\\
        &=\sum_{i\in[n]}\frac{d}{dp_i} \E_{z\sim p}\Big[ (1-p_i)f(z0_i)+p_i\E_{a\in\{1,2,\dotsc,q-1\}}\big[f(za_i)\big] \Big]\\
        &=\sum_{i\in [n]}\E_{\substack{z\sim p\\a\neq 0}}\big[-f(z0_i)+f(za_i)\big].
    \end{align*}
    Since $f$ is monotone decreasing, we must always have $f(z0_i)\geq f(za_i)$, and thus 
    \begin{align*}
        \frac{d}{dp}\E_{z\sim p}\big[f(z)\big]
        &=-\sum_{i\in [n]}\E_{\substack{z\sim p\\a\neq 0}}\Big[\big|f(z0_i)-f(za_i)\big|\Big].
    \end{align*}
    Considering only the vectors $z$ with $z_i=0$, we then get
    \begin{align*}
        \frac{d}{dp}\E_{z\sim p}\big[f(z)\big]&\leq -\sum_{i\in[n]}\E_{\substack{z\sim p}}\Big[\mathbbm{1}\{z_i=0\}\cdot \E_{a\neq 0}\big[|f(z)-f(za_i)|\big]\Big]\\
        &\leq -\sum_{i\in[n]}\E_{\substack{z\sim p}}\Big[\mathbbm{1}\{z_i=0\}\cdot \frac{1}{q-1}\mathbbm{1}\big\{\exists a\neq0\textnormal{ s.t. }|f(z)-f(za_i)|=1\big\}\Big]\\
        &=-\frac{1}{q-1}\E_{z\sim p}[h_f(z)].
    \end{align*}
\end{proof}
From Lemma \ref{russoFq}, it is clear that for any monotone function $f:\F_q^n\rightarrow\F_2$, any lower bound on $\E[h_f]$ will yield an upper bound on the width of the threshold of $\E[f].$ We thus turn to proving bounds on $\E[h_f]$, for $f$ the indicator function of a successful decoding. For this we will need the following helpful lemma. Recall that for any vectors $a,b\in\F_q^n$, we denote by $d(a,b)$ the Hamming weight of $a-b$.
\begin{lemma}\label{largesupport}
    Suppose $z\in\F_q^n$ and $c\in C$ satisfy 
    \begin{align*}
        d(z,0)\leq d(z,c).
    \end{align*}
    Then we must have
    \begin{align*}
        \big|\textnormal{supp}(c)\setminus \textnormal{supp}(z)\big|\geq\frac{d_{\min}(C)}{q}-d(z,c)+\min_{c'\in C}\{d(z,c')\}.
    \end{align*}
\end{lemma}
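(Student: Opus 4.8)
The plan is to estimate $\mathrm{supp}(c)\setminus\mathrm{supp}(z)$ by relating it to the Hamming distance $d(z,c)$ and, in turn, to the minimum distance of the code via the triangle-inequality-type argument that $c$ is far from every other codeword. First I would partition the coordinates according to the behavior of $z$ and $c$: write $[n] = S_0 \cup S_1 \cup S_2$, where $S_0 = \{i : z_i = c_i = 0\}$ (hence $i \notin \mathrm{supp}(z), i\notin\mathrm{supp}(c)$), $S_1 = \{i : z_i \neq 0\}$, and $S_2 = \{i : z_i = 0, c_i \neq 0\} = \mathrm{supp}(c)\setminus\mathrm{supp}(z)$, which is exactly the set whose size we want to bound below. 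On $S_2$ every coordinate contributes to $d(z,c)$ since $z_i = 0 \neq c_i$; on $S_1$ the coordinate $i$ contributes to $d(z,c)$ unless $z_i = c_i$. So $d(z,c) = |S_2| + |\{i \in S_1 : z_i \neq c_i\}| \le |S_2| + |S_1|$, i.e. $|S_2| \ge d(z,c) - |\{i \in S_1 : z_i \neq c_i\}|$. The term $|\{i \in S_1 : z_i \neq c_i\}|$ is at most $|S_1| = d(z,0) = \mathrm{wt}(z)$, but that is too lossy; instead I want to control it by the number of coordinates where $z$ and $c$ agree on a nonzero value.

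The key step is to lower bound $|S_2|$ using the distance between $c$ and its \emph{second-closest} neighbor in $C$. Consider the codeword $c' \in C$ achieving $\min_{c'' \in C} d(z, c'')$. If this closest codeword is $c$ itself then the argument simplifies; in general, note that $0 \in C$, so $\min_{c''} d(z,c'') \le d(z,0)$, and by hypothesis $d(z,0) \le d(z,c)$, so $c$ is not (necessarily) the closest. The idea is that the coordinates counted in $|\{i \in S_1 : z_i = c_i \neq 0\}|$ — where $z$ and $c$ genuinely agree — cannot be too numerous, because the codeword $c$ and the all-zero codeword (or $c$ and $c'$) must differ in at least $d_{\min}(C)$ coordinates, and in a $q$-ary alphabet each coordinate of $z$ can "agree" with at most one nonzero value, forcing a $1/q$-type loss. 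Concretely, I would argue: by the triangle inequality $d(0, c) \le d(z,0) + d(z,c)$, and separately count how many coordinates $i$ have $z_i = c_i \ne 0$ in terms of $d(z,c)$, $d(z,0)$, and $d(z,c')$; a counting/averaging step over the $q-1$ nonzero symbols yields the $\frac{d_{\min}(C)}{q}$ term. Combining $|S_2| \ge d(z,c) - (\text{agreement count})$ with the upper bound on the agreement count gives the claimed inequality $|S_2| \ge \frac{d_{\min}(C)}{q} - d(z,c) + \min_{c'\in C} d(z,c')$.

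The main obstacle I anticipate is pinning down precisely \emph{why} the $1/q$ factor appears and how the three distances $d(z,0)$, $d(z,c)$, and $\min_{c'} d(z,c')$ enter with exactly these signs — this is really a careful bookkeeping argument about set sizes. The heuristic is: $z$ is roughly equidistant from $0$ and $c$ (by hypothesis $d(z,0)\le d(z,c)$), and if $z$ had very few coordinates in $\mathrm{supp}(c)\setminus\mathrm{supp}(z)$ then $z$ would be "too aligned" with $c$ on $\mathrm{supp}(z)\cap\mathrm{supp}(c)$, making $z$ closer to $c$ than to some actual nearby codeword $c'$ — contradicting the definition of $c'$ as the minimizer unless $c'$ is itself close to $c$, which $d_{\min}$ prevents except for a $1/q$ slack coming from the fact that on a given coordinate $z_i$ can match the value of $c$ but the matching value of $c-c'$ restricted there ranges over $q-1$ possibilities. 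I would make this rigorous by writing $d_{\min}(C) \le d(c, c') \le |\{i : c_i \ne c'_i\}|$, splitting this coordinate set by whether $z_i$ equals $0$, $c_i$, $c'_i$, or something else, and bounding each piece; the coordinates where $z_i = c_i \ne c'_i$ or $z_i = c'_i \ne c_i$ are the ones requiring the averaging argument that produces the $\frac1q$.
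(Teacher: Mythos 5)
Your coordinate partition and the identity $d(z,c) = |S_2| + |\{i \in S_1 : z_i \neq c_i\}|$, equivalently $|S_2| = d(z,c) - d(z,0) + |\{i : z_i = c_i \neq 0\}|$, are correct and match the paper's starting point. But from there the plan has both a directional error and a missing key idea. Since $d(z,c) \geq d(z,0)$ by hypothesis, that identity gives $|S_2| \geq |\{i : z_i = c_i \neq 0\}|$, so you need a \emph{lower} bound on the agreement count; yet you repeatedly argue that this count ``cannot be too numerous,'' which is the wrong direction. Moreover the mechanism you propose for producing the $1/q$ --- writing $d_{\min} \leq d(c,c')$ for $c'$ the nearest codeword to $z$ --- fails outright when $c' = c$ (then $d(c,c')=0$ and no minimum-distance information is available), and even when $c' \neq c$, the quantity $d(c,c')$ does not cleanly control how much $z$ agrees with $c$ on $\textnormal{supp}(z)\cap\textnormal{supp}(c)$.

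The paper's argument uses linearity in a way your proposal never does: for each nonzero scalar $\alpha$, the vector $\alpha c$ is a codeword with \emph{the same support} as $c$. On $D := \textnormal{supp}(z)\cap\textnormal{supp}(c)$, each coordinate $i$ has $\alpha c_i = z_i$ for exactly one $\alpha \in \F_q^*$, so by averaging some $\alpha$ satisfies $|\{i\in D : \alpha c_i = z_i\}| \geq |D|/(q-1)$. Because $\alpha c$ is a codeword, $d(z,\alpha c) \geq \min_{c''}d(z,c'') = d(z,c)-\nu$; since $d(z,\alpha c)$ and $d(z,c)$ differ only through the agreement on $D$, this converts to the needed lower bound $|\{i\in D: c_i=z_i\}| \geq \frac{|D|}{q-1} - \nu$. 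Combined with $|S_2|\geq|\{i\in D:c_i=z_i\}|$ and $d_{\min}\leq\textnormal{wt}(c)=|S_2|+|D|$, one gets $d_{\min}\leq q|S_2|+q\nu$, which is the claim. You do gesture at ``an averaging step over the $q-1$ nonzero symbols,'' but you never identify the family $\{\alpha c\}_{\alpha\in\F_q^*}$ of same-support codewords as the objects to average over; without that observation there is no apparent route to the $1/q$ factor, nor any reason for the slack $\nu = d(z,c)-\min_{c'}d(z,c')$ to appear with the correct sign.
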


\begin{proof}
    For notational simplicity, we define the set 
    \begin{align*}
        S:=\textnormal{supp}(c)\setminus \textnormal{supp}(z)
    \end{align*}
    and the slack quantity
    \begin{align*}
        \nu:=d(z,c)-\min_{c'\in C}\{d(z,c')\}
    \end{align*}
    Our goal is to show that $|S|\geq \frac{d_{\min}}{q}-\nu.$ We first note that since $d(z,c)\geq d(z,0)$, we must have
\begin{align}\label{observeS}
    |S|\geq\big|\{i\in \textnormal{supp}(c)\cap \textnormal{supp}(z): c_i= z_i\}\big|.
\end{align}
We also note that
\begin{align}\label{boundS}
    \big|\{i\in \textnormal{supp}(z)\cap \textnormal{supp}(c): c_i=z_i\}\big|\geq \frac{1}{q-1}\big|\textnormal{supp}(c)\cap \textnormal{supp}(z)\big|-\nu.
\end{align}
This is because for all $\alpha\in\{1,2,\dotsc,q-1\}$, we have
    \begin{align*}
        d(z,\alpha c)=\big|\textnormal{supp}(z)\setminus \textnormal{supp}(c)\big|+\big|\textnormal{supp}(c)\setminus \textnormal{supp}(z)\big|+\big|\{i\in \textnormal{supp}(z)\cap \textnormal{supp}(c):\alpha c_i\neq z_i\}\big|, 
    \end{align*}
    while by averaging there must be some $\alpha\in \{1,2,\dotsc,q-1\}$ such that $$\big|\{i\in \textnormal{supp}(z)\cap \textnormal{supp}(c):\alpha c_i= z_i\}\big|\geq \frac{1}{q-1}\big|\textnormal{supp}(z)\cap \textnormal{supp}(c)\big|.$$
Since $d(z,c)\leq d(z,\alpha c)+\nu$ for every codeword $\alpha c$, we then get equation (\ref{boundS}). With respect to the minimum distance of our code $C$, this gives us
\begin{align*}
    d_{\min}&\leq \textnormal{wt}(c)\\
    &=\big|\textnormal{supp}(c)\setminus \textnormal{supp}(z)\big|+\big|\textnormal{supp}(c)\cap \textnormal{supp}(z)\big|\\
    &\leq |S|+(q-1)(|S|+\nu)\\
    &\leq q|S|+q\nu,
\end{align*}
where in the third line we used equations (\ref{observeS}) and (\ref{boundS}). 
\end{proof}

We are now ready to prove our bound on $\E[h_f]$, for $f$ the indicator function of a successful decoding. Consider the following total order $\prec$ on $\F_q^n$. If $\textnormal{wt}(a)<\textnormal{wt}(b)$, then $a\prec b.$ If $\textnormal{wt}(a)=\textnormal{wt}(b)$ and the support of $a$ comes after the support of $b$ in the lexicographic order, then $a\prec b.$ For completeness' sake (this last point will not appear in our analysis), if $a$ and $b$ have the same support and $a$ comes after $b$ in the full lexicographic order (i.e. the lexicographic order with order $0<1<2<...<q-1$ over $\F_q$), then we say $a\prec b$. Consider the max-likelihood decoder $D^*:\F_q^n\rightarrow C$ defined by
\begin{align}\label{maxdecoderdefn}
    D^*(z):=\min_{c\in C}\{z-c\},
\end{align}
where the comparisons between vectors are taken with respect to the total order $\prec.$ For each codeword $c\in C$, we define the decoding region of $c$ as follows.
\begin{align*}
    \Omega_c:=\{z\in\F_q^n:D^*(z)=c\}.
\end{align*}
\begin{claim}\label{decodersymmetry}
    For all $c\in C$, we have
    \begin{align*}
        \Pr_{z\sim p}[D^*(z+c)=c]&=\Pr_{z\sim p}[z\in\Omega_0].
    \end{align*}
\end{claim}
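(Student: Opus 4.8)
The plan is to show that the decoding region $\Omega_c$ is exactly a translate of $\Omega_0$ by $c$, i.e. $\Omega_c = \Omega_0 + c$, from which the claim follows immediately because the $p$-noisy distribution is translation-invariant in the sense that $z \sim p$ iff $z$ is a $p$-noisy string, and the event $\{D^*(z+c) = c\}$ over $z \sim p$ has the same probability as $\{D^*(z') = c\}$ over $z' = z + c$ — but we cannot sample $z'$ as a $p$-noisy string directly. Instead, the clean way is: $\Pr_{z \sim p}[D^*(z+c) = c] = \Pr_{z\sim p}[z + c \in \Omega_c] = \Pr_{z \sim p}[z \in \Omega_c - c]$, so it suffices to prove $\Omega_c - c = \Omega_0$, i.e. that $z \in \Omega_0 \iff z + c \in \Omega_c$.

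First I would unfold the definition: $z + c \in \Omega_c$ means $D^*(z+c) = c$, which by \eqref{maxdecoderdefn} means $c = \min_{c' \in C}\{(z+c) - c'\}$ with respect to $\prec$, i.e. $\vec{0} = (z+c) - c$ is the $\prec$-minimum of the coset $\{(z+c) - c' : c' \in C\} = z + C$ (using that $C$ is linear, so $c' \mapsto c' - c$ permutes $C$). On the other hand, $z \in \Omega_0$ means $D^*(z) = \vec 0$, i.e. $\vec 0 = z - \vec 0$ is the $\prec$-minimum of $\{z - c' : c' \in C\} = z + C$ (again using linearity, $-C = C$). Both conditions are literally the statement ``$\vec 0$ is the $\prec$-smallest element of the coset $z + C$,'' so they are equivalent. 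The key facts being used are just (i) $C$ is a linear subspace, hence closed under the maps $x \mapsto x - c$ and $x \mapsto -x$, so the coset $z+C$ appearing in the $\argmin$ is literally the same set in both cases; and (ii) the $\argmin$ in \eqref{maxdecoderdefn} is with respect to a fixed total order $\prec$ that does not depend on which codeword we are testing against.

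The main thing to be careful about — and the only place a subtlety could hide — is making sure the tie-breaking order $\prec$ is genuinely the \emph{same} order used in all the $\argmin$ computations, so that ``$\vec 0$ minimizes $z + C$'' is an unambiguous, codeword-independent statement; since $\prec$ is defined once and for all on $\F_q^n$ (via weight, then lexicographic order on supports, then full lexicographic order), this is fine. I do not anticipate a real obstacle here; the claim is essentially the standard observation that a maximum-likelihood decoder with a translation-consistent tie-breaking rule for a linear code has translation-equivariant decoding regions, and the proof is a two-line coset manipulation. I would close by noting that translation-invariance of the $p$-noisy distribution is not even needed — we only relabel the sample point — so the equality of probabilities is exact, with no $o(1)$ error.
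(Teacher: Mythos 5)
Your proof is essentially the paper's own argument: the paper likewise reduces the claim to showing that $z\mapsto z+c$ is a bijection between $\Omega_0$ and $\Omega_c$, by observing that both $z\in\Omega_0$ and $z+c\in\Omega_c$ are equivalent to the same statement about which element of the coset $z+C$ is $\prec$-minimal, using only linearity of $C$ and the fact that $\prec$ is a single fixed total order. One small slip to correct: $(z+c)-c = z$ and $z-\vec 0 = z$ (not $\vec 0$), so the common condition should be ``$z$ is the $\prec$-smallest element of $z+C$,'' not ``$\vec 0$ is''; since the slip appears symmetrically on both sides of the equivalence, it does not affect the validity of the argument.
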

\begin{proof}
    It is clear that
    \begin{align*}
        \Pr_{z\sim p}[D^*(z)=0]&=\Pr_{z\sim p}[z\in\Omega_0].
    \end{align*}
    Thus it will suffice to show that for any codeword $c\in C$, the map $z\mapsto z+c$ is a bijection between $\Omega_0$ and $\Omega_c.$ But this is indeed the case, as by linearity of $C$ we have
    \begin{align*}
        z\in \Omega_0 &\Longleftrightarrow z\prec z-c'\textnormal{ for all } c'\in C\\
        &\Longleftrightarrow z+c-c \prec z+c-c'\textnormal{ for all } c'\in C\\
        &\Longleftrightarrow z+c\in \Omega_c.
    \end{align*}
\end{proof}
For simplicity, when looking at the $0$ codeword we will drop the subscript and write 
\begin{align*}
\Omega:=\Omega_0.    
\end{align*} 
We will also abuse notation and write $\Delta_\Omega$ and   $h_\Omega$ to mean $\Delta_{\mathbbm{1}_\Omega}$ and $h_{\mathbbm{1}_\Omega}$ respectively.
\begin{lemma}\label{bounddelta}
    Consider any linear code $C\subseteq\F_q^n$. Its corresponding decoding region $\Omega$ satisfies
    \begin{align*}
        \Delta_\Omega \geq \frac{d_{\min}}{q}-3,
    \end{align*}
    where $d_{\min}$ is the minimum distance of $C.$
\end{lemma}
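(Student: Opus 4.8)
The plan is to show that whenever $h_\Omega(z)\neq 0$, the point $z$ already has at least $\tfrac{d_{\min}}{q}-3$ ``bad'' coordinates, i.e.\ coordinates $j$ with $z_j=0$ for which some $a\neq 0$ has $D^*(z a_j)\neq 0$. Fix such a $z$: then $D^*(z)=0$, and there is a coordinate $i$ with $z_i=0$ and some $a\neq 0$ with $c:=D^*(z a_i)\neq 0$. First I would sandwich $d(z,c)$. Since $D^*(z)=0$, the zero codeword is a closest codeword to $z$, so $\min_{c'\in C}d(z,c')=d(z,0)\leq d(z,c)$; since $z$ and $z a_i$ are neighbors and $D^*(z a_i)=c$, we get $d(z a_i,c)\leq d(z a_i,0)=d(z,0)+1$ and hence $d(z,c)\leq d(z,0)+2$. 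Feeding $z$ and $c$ into Lemma~\ref{largesupport} (whose hypothesis is exactly $d(z,0)\le d(z,c)$, and whose conclusion uses $\min_{c'}d(z,c')=d(z,0)$) then gives
\[
  \big|\operatorname{supp}(c)\setminus\operatorname{supp}(z)\big|\ \geq\ \frac{d_{\min}}{q}-\big(d(z,c)-d(z,0)\big)\ \geq\ \frac{d_{\min}}{q}-2 .
\]

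Next I would argue that (all but at most one of) the coordinates $j\in\operatorname{supp}(c)\setminus\operatorname{supp}(z)$ are bad for $z$, witnessed by $a=c_j$. For such $j$ let $z^{(j)}$ denote $z$ with its $j$th coordinate reset to $c_j$; then $d(z^{(j)},0)=d(z,0)+1$ and $d(z^{(j)},c)=d(z,c)-1$. If $d(z,c)\leq d(z,0)+1$ then $c$ is strictly closer to $z^{(j)}$ than $0$ is, so $D^*(z^{(j)})\neq 0$ and $j$ is bad; moreover in this case Lemma~\ref{largesupport} already gives $|\operatorname{supp}(c)\setminus\operatorname{supp}(z)|\geq\tfrac{d_{\min}}{q}-1$, so $h_\Omega(z)\geq\tfrac{d_{\min}}{q}-1$ and we are done. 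The delicate case — the one I expect to be the main obstacle — is $d(z,c)=d(z,0)+2$, in which $z^{(j)}$ is \emph{equidistant} from $0$ and from $c$, so the decoding outcome is governed purely by the tie-breaking total order $\prec$.

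For that case my plan is to boil the tie-breaking down to a single quantity. With $A:=\operatorname{supp}(z)$, $B:=\operatorname{supp}(z-c)$, one checks that $d(z,c)=d(z,0)+2$ forces $c_i=a$, that for each $j\in\operatorname{supp}(c)\setminus\operatorname{supp}(z)$ (such $j$ lie in $B\setminus A$) we have $\operatorname{supp}(z^{(j)})=A\cup\{j\}$ and $\operatorname{supp}(z^{(j)}-c)=B\setminus\{j\}$, and that the symmetric difference of these two equal-weight sets equals $A\triangle B$ with $j$ lying on the $A$-side. Hence, under $\prec$, the codeword $c$ beats $0$ at $z^{(j)}$ if and only if $m:=\min(A\triangle B)$ satisfies $m\in B$ and $m\neq j$. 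Applying this with $j=i$ (legitimate because $c_i=a$, so $z^{(i)}=z a_i$): the assumption $D^*(z a_i)=c$ says $c$ beats $0$ at $z a_i$, which forces $m\in B\setminus A$ and $m\neq i$; in particular $m\in\operatorname{supp}(c)\setminus\operatorname{supp}(z)$. Then for \emph{every} $j\in\operatorname{supp}(c)\setminus\operatorname{supp}(z)$ with $j\neq m$ we have $m\in B$ and $m\neq j$, so $D^*(z^{(j)})\neq 0$ and $j$ is bad. Thus only the single coordinate $m$ can fail to be bad, giving $h_\Omega(z)\geq|\operatorname{supp}(c)\setminus\operatorname{supp}(z)|-1\geq\tfrac{d_{\min}}{q}-3$; the ``$-3$'' is exactly the Lemma~\ref{largesupport} slack $d(z,c)-d(z,0)\le 2$ plus this one possibly-lost coordinate. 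The only real work is this last bookkeeping: one has to choose $\prec$ (weight first, then lexicographic order on supports viewed as indicator vectors) precisely so that the single inequality ``$D^*(z a_i)=c$'' propagates to ``$D^*(z^{(j)})\neq 0$ for all but one $j$''.
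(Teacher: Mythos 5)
Your proof is correct and follows essentially the same route as the paper: fix $z$ with $h_\Omega(z)\neq 0$, sandwich $d(z,c)$ between $d(z,0)$ and $d(z,0)+2$, invoke Lemma~\ref{largesupport} to bound $|\operatorname{supp}(c)\setminus\operatorname{supp}(z)|$, and handle the equidistant case $d(z,c)=d(z,0)+2$ via the tie-breaking order $\prec$, losing at most one coordinate. Your bookkeeping via $m=\min(A\triangle B)$ is, if anything, a cleaner and more explicit rendering of the paper's argument (the paper's phrase ``$\operatorname{supp}(z-c)$ comes after $\operatorname{supp}(z)$ in lex order'' compares supports of unequal sizes, which is less transparent); in the end your $m$ coincides with the paper's $j^*=\min(\operatorname{supp}(c)\setminus\operatorname{supp}(z))$.
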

\begin{proof}
    Consider any $z\in\F_q^n$ with $h_{\Omega}(z)\neq 0.$ By definition, the following two conditions must hold.
    \begin{enumerate}[label=(\roman*)]
        \item $D^*(z)=0$,
        \item There exist a codeword $c\in C$ and a coordinate $i\in[n]$ such that $D^*(zc_i)=c.$
    \end{enumerate}
    Our goal will be to show that there are at least $\frac{d_{\min}}{q}-3$ choices of coordinates $i$ where $z_i=0$ and point (ii) above holds. We note that points (i) and (ii) imply that 
    \begin{align}\label{possibleweights}    
    d(z,0)\leq d(z,c)\leq d(z,0)+2.
    \end{align}
    By Lemma \ref{largesupport}, we must then have
    \begin{align}\label{coordsinsupport}
        \big|\textnormal{supp}(c)\setminus \textnormal{supp}(z)\big|\geq\frac{d_{\min}}{q}-2.
    \end{align}
    We now consider two separate cases, depending on the weight of $z-c$.
    
    \textbf{Case 1:} $ d(z,c)\in\big\{d(z,0), d(z,0)+1\big\}.$
    Then for every $j\in \textnormal{supp}(c)\setminus \textnormal{supp}(z)$, we have 
    \begin{align*}
        d(zc_j,c)&=d(z,c)-1\\
        &\leq d(z,0)\\
        &=d(zc_j,0)-1,
    \end{align*} and thus $zc_j\notin \Omega.$ By equation \ref{coordsinsupport}, we thus have $h_{\Omega}(z)\geq \frac{d_{\min}}{q}-2.$

     \textbf{Case 2:} $ d(z,c)= d(z,0)+2.$
    Then for every $j\in \textnormal{supp}(c)\setminus \textnormal{supp}(z)$, we have 
    \begin{align}\label{equalweight}
        d(zc_j,c)&=d(z,c)-1\nonumber\\
        &= d(z,0)+1\nonumber\\
        &=d(zc_j,0).
    \end{align}
    We want to show that for all but one choices of $j\in \textnormal{supp}(c)\setminus \textnormal{supp}(z)$, we have
    \begin{align*}
        zc_j-c\prec zc_j,
    \end{align*}
    or equivalently that the support of $zc_j-c$ comes after the support of $zc_j$ in the lexicographic order. We note that by point (ii) above, there exists a coordinate $i\in[n]$ such that supp$(zc_i-c)$ comes after supp$(zc_i)$ in the lexicographic order. But this means that supp$(z-c)$ must come after supp$(z)$ in the lexicographic order. Define the coordinate
    \begin{align*}
        j^*&:=\min\Big\{j\in \textnormal{supp}(z-c)\setminus\textnormal{supp}(z)\Big\}\\
        &=\min\Big\{j\in \textnormal{supp}(c)\setminus\textnormal{supp}(z)\Big\},
    \end{align*}
    where the minimum is taken over the standard order $1<2<3\dotsc<n$. Then for any coordinate $j>j^*$, supp$(zc_j-c)$ must come after $\textnormal{supp}(zc_j)$ in the lexicographic order. Combining this with equation (\ref{equalweight}), we get that for every coordinate $j\in \textnormal{supp}(c)\setminus\textnormal{supp}(z)$, $j\neq j^*$, we have
    \begin{align*}
        zc_j-c\prec zc_j.
    \end{align*}
    By equation (\ref{coordsinsupport}), there are at least $\frac{d_{\min}}{q}-3$ such coordinates. Thus we indeed have
    \begin{align*}
        h_\Omega(z) \geq \frac{d_{\min}}{q}-3.
    \end{align*}

 \end{proof}
Combining our results from Sections \ref{sectionisoperimetry} and \ref{sectionsharptransition}, we get the following bound on the derivative of the decoding success probability.

\begin{lemma}\label{boundderivative}
    Consider any linear code $C\subseteq\F_q^n$ with minimum distance $d_{\min}\geq 4q$, and any noise parameter $p\in[0,1].$ The decoding region $\Omega$ for the code $C$ satisfies
    \begin{align*}
      \frac{d}{dp}\Pr[z\in\Omega]\leq-\frac{1-p}{4}\cdot \frac{\sqrt{d_{\min}}}{q^{3/2}}\Pr[z\in\Omega]\Big(1-\Pr[z\in\Omega]\Big),
    \end{align*}
where all the probabilities are taken with respect to the q-ary p-biased distribution $z \sim p$.
\end{lemma}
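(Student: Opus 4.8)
The plan is to chain together the three ingredients assembled in Sections~\ref{sectionisoperimetry} and~\ref{sectionsharptransition}: Russo's Lemma (Lemma~\ref{russoFq}), the isoperimetric inequality (Theorem~\ref{isoperimetry}), and the lower bound on $\Delta_\Omega$ (Lemma~\ref{bounddelta}). First I would observe that the indicator function $\mathbbm{1}_\Omega$ of the decoding region of the all-zero codeword is monotone decreasing: if $z \in \Omega$, i.e. $z \prec z - c'$ for all $c' \in C$, then lowering the weight of $z$ at a zero coordinate can only keep $z$ in $\Omega$ (this should be spelled out carefully, but it is exactly the monotonicity used implicitly in Lemma~\ref{bounddelta}, where flipping a zero coordinate of $z$ to a nonzero value is what can push $z$ out of $\Omega$). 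Since $\Pr_{z\sim p}[z\in\Omega] = \E_{z\sim p}[\mathbbm{1}_\Omega(z)]$, all three earlier results apply to $f = \mathbbm{1}_\Omega$.

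Next, I would simply compose the inequalities. Russo's Lemma gives
\[
\frac{d}{dp}\Pr[z\in\Omega] \;\leq\; -\frac{1}{q-1}\,\E_{z\sim p}[h_\Omega(z)].
\]
Theorem~\ref{isoperimetry} gives
\[
\E_{z\sim p}[h_\Omega(z)] \;\geq\; \frac{1-p}{2}\sqrt{\Delta_\Omega}\,\Pr[z\in\Omega]\bigl(1-\Pr[z\in\Omega]\bigr),
\]
and Lemma~\ref{bounddelta} gives $\Delta_\Omega \geq \tfrac{d_{\min}}{q} - 3$. Combining,
\[
\frac{d}{dp}\Pr[z\in\Omega] \;\leq\; -\frac{1-p}{2(q-1)}\sqrt{\tfrac{d_{\min}}{q}-3}\;\Pr[z\in\Omega]\bigl(1-\Pr[z\in\Omega]\bigr).
\]
It then remains to clean up the constants to match the stated bound $\tfrac{1-p}{4}\cdot\tfrac{\sqrt{d_{\min}}}{q^{3/2}}$. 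Using the hypothesis $d_{\min} \geq 4q$, we have $\tfrac{d_{\min}}{q} - 3 \geq \tfrac{d_{\min}}{q} - \tfrac{3}{4}\cdot\tfrac{d_{\min}}{q} = \tfrac{d_{\min}}{4q}$, so $\sqrt{\tfrac{d_{\min}}{q}-3} \geq \tfrac{1}{2}\sqrt{d_{\min}/q}$; and $q - 1 \leq q$. Plugging in,
\[
\frac{1-p}{2(q-1)}\sqrt{\tfrac{d_{\min}}{q}-3}
\;\geq\; \frac{1-p}{2q}\cdot\frac{1}{2}\cdot\frac{\sqrt{d_{\min}}}{\sqrt q}
\;=\; \frac{1-p}{4}\cdot\frac{\sqrt{d_{\min}}}{q^{3/2}},
\]
which is exactly the claimed coefficient.

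The only real content beyond bookkeeping is verifying that $\mathbbm{1}_\Omega$ is genuinely monotone decreasing in the sense of Section~\ref{defnmonotone} — that is, $\mathbbm{1}_\Omega(z0_i) \geq \mathbbm{1}_\Omega(za_i)$ for every $i$, every $z$, and every $a \neq 0$. This is where I expect the main (though still modest) obstacle to lie: one has to check that if $za_i \in \Omega$ then $z0_i \in \Omega$, i.e. that moving a coordinate from a nonzero value back to $0$ cannot introduce a tie-break failure under the order $\prec$. Since $z0_i$ has weight one less than $za_i$ (when $z_i$ is being changed from $a$ to $0$) while $z0_i - c'$ has weight at most one less than $za_i - c'$ for every $c' \in C$, the weight-first structure of $\prec$ does the work; the lexicographic tie-break only matters when weights are equal, and a short case analysis handles it. Once monotonicity is in hand, the rest is the routine substitution above, and I would present it in three or four lines.
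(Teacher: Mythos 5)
Your proposal is correct and follows the same route as the paper: compose Lemma~\ref{russoFq}, Theorem~\ref{isoperimetry}, and Lemma~\ref{bounddelta}, then simplify the resulting constant using $d_{\min}\geq 4q$ and $q-1\leq q$. The only difference is that you spell out the monotonicity of $\mathbbm{1}_\Omega$ and the arithmetic cleanup, both of which the paper treats as immediate, so if anything your write-up is marginally more careful.
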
    
\begin{proof}
By definition, the decoding region $\Omega$ is monotone decreasing. By Lemma \ref{russoFq} and Theorem \ref{isoperimetry}, we then get
\begin{align*}
    \frac{d}{dp}\Pr_{z\sim p}[z\in\Omega]&\leq-\frac{1}{q-1}\E_{z\sim p}[h_{\Omega}(z)]\\
    &\leq -\frac{1-p}{2(q-1)}\sqrt{\Delta_{\Omega}}\Pr_{z\sim p}[z\in\Omega]\Big(1-\Pr_{z\sim p}[z\in\Omega]\Big).
    \end{align*}
Applying Lemma \ref{bounddelta}, we must thus indeed have 
\begin{align*}
      \frac{d}{dp}\Pr[z\in\Omega]&\leq-\frac{1-p}{2(q-1)}\sqrt{\frac{d_{\min}}{q}-3}\Pr[z\in\Omega]\Big(1-\Pr[z\in\Omega]\Big)\\
      &\leq -\frac{1-p}{4}\cdot \frac{\sqrt{d_{\min}}}{q^{3/2}}\Pr[z\in\Omega]\Big(1-\Pr[z\in\Omega]\Big).
    \end{align*}
\end{proof}
We are now ready to prove our sharp transition result. For convenience, given a fixed code $C$, we denote the probability of a decoding success by
\begin{align*}
    g(p):=\Pr_{z\sim p}[z\in \Omega].
\end{align*}
The theorem below shows that the function $g$ transitions very rapidly from 1 to 0. In spirit, it says that for any noise parameters $p_0<p_1$ that aren't extremely close to each other, either $g(p_0)\approx 1$ or $g(p_1)\approx 0$.
\begin{theorem}\label{gbound}
    Consider any linear code $C\subseteq\F_q^n$ with minimum distance $d_{\min}\geq 4q$, and any noise parameters $0\leq p_0\leq p_1\leq 1.$ Then we have
    \begin{align*}
        g(p_1)\Big(1-g(p_0)\Big)\leq e^{-\frac{1-p_1}{4}\cdot\frac{\sqrt{d_{\min}}}{q^{3/2}}(p_1-p_0)},
    \end{align*}
    where $d_{\min}$ denotes the minimum distance of the code $C.$
\end{theorem}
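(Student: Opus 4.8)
The plan is to integrate the differential inequality supplied by Lemma~\ref{boundderivative}. Write $c := \frac{\sqrt{d_{\min}}}{4q^{3/2}}$, so that Lemma~\ref{boundderivative} reads $g'(p) \le -c\,(1-p)\,g(p)\bigl(1-g(p)\bigr)$ for all $p \in [0,1]$. Two structural facts about $g$ are worth recording up front: it is a polynomial in $p$ (a finite sum, over $z \in \Omega$, of monomials in $1-p$ and $p/(q-1)$), hence smooth; and it is monotone decreasing by Fact~\ref{strictlyincreasing}, since $\mathbbm{1}_\Omega$ is monotone decreasing.

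First I would dispose of the degenerate cases. If $g(p_1) = 0$, then the left-hand side of the claimed inequality is $0$ while the right-hand side is positive, so there is nothing to prove; likewise if $g(p_0) = 1$, then $1 - g(p_0) = 0$ and the left-hand side is again $0$. So assume $g(p_1) > 0$ and $g(p_0) < 1$. Since $g$ is decreasing, $g(p) \ge g(p_1) > 0$ for every $p \le p_1$ and $g(p) \le g(p_0) < 1$ for every $p \ge p_0$; hence $0 < g(p) < 1$ throughout $[p_0,p_1]$, and in particular $g(p)\bigl(1-g(p)\bigr) > 0$ there, which legitimizes the division and the logarithm below.

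Now divide the inequality of Lemma~\ref{boundderivative} by $g(p)\bigl(1-g(p)\bigr) > 0$ and recognize the left side as a logarithmic derivative:
\begin{align*}
    \frac{d}{dp}\log\frac{g(p)}{1-g(p)} \;=\; \frac{g'(p)}{g(p)\bigl(1-g(p)\bigr)} \;\le\; -c\,(1-p), \qquad p \in [p_0,p_1].
\end{align*}
Integrating from $p_0$ to $p_1$ and using $1-p \ge 1-p_1$ on the interval (so that $\int_{p_0}^{p_1}(1-p)\,dp \ge (1-p_1)(p_1-p_0)$) gives
\begin{align*}
    \log\frac{g(p_1)}{1-g(p_1)} - \log\frac{g(p_0)}{1-g(p_0)} \;\le\; -c\,(1-p_1)(p_1-p_0),
\end{align*}
that is, $\dfrac{g(p_1)}{1-g(p_1)}\cdot\dfrac{1-g(p_0)}{g(p_0)} \le e^{-c(1-p_1)(p_1-p_0)}$. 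Finally, since $g(p_0) \le 1$ and $1-g(p_1) \le 1$, multiplying through by $g(p_0)\bigl(1-g(p_1)\bigr) \le 1$ yields $g(p_1)\bigl(1-g(p_0)\bigr) \le e^{-c(1-p_1)(p_1-p_0)}$, which is exactly the assertion of the theorem after substituting $c = \frac{\sqrt{d_{\min}}}{4q^{3/2}}$.

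I do not expect a genuine obstacle here: all the substance has already been extracted into Lemma~\ref{boundderivative} (and hence into the isoperimetric inequality of Section~\ref{sectionisoperimetry} and the distance bound of Lemma~\ref{bounddelta}), and what remains is the standard Margulis--Russo-style integration. The only two points demanding a little care are (a) verifying that $g$ stays strictly inside $(0,1)$ on $[p_0,p_1]$ so that the manipulations are valid — handled by monotonicity of $g$ together with the two degenerate cases — and (b) the elementary estimate $\int_{p_0}^{p_1}(1-p)\,dp \ge (1-p_1)(p_1-p_0)$ that replaces the integral by the clean expression in the statement.
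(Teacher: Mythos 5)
Your argument is correct and mirrors the paper's proof almost step for step: both introduce the logit $\log\frac{g(p)}{1-g(p)}$, bound its derivative via Lemma~\ref{boundderivative}, integrate over $[p_0,p_1]$ using $1-p \geq 1-p_1$, and finish by multiplying through by $g(p_0)(1-g(p_1))\le 1$. The only addition you make is the explicit treatment of the degenerate cases $g(p_1)=0$ and $g(p_0)=1$, which the paper glosses over; that is a small but legitimate point of care, since the logit is undefined at those endpoints.
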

\begin{proof}
    Define the function
    \begin{align*}
        G(p):=\ln\frac{g(p)}{1-g(p)}.
    \end{align*}
    Then by Lemma \ref{boundderivative}, we have
    \begin{align*}
        \frac{dG}{dp}&=\frac{1}{g(p)\Big(1-g(p)\Big)}\cdot \frac{dg}{dp}\\
        &\leq -\frac{1-p}{4}\cdot\frac{\sqrt{d_{\min}}}{q^{3/2}}.
    \end{align*}
    By the fundamental theorem of calculus, we then have
    \begin{align*}
        G(p_0)-G(p_1)&=-\int_{p_0}^{p_1}\frac{dG}{dp}dp\\
        &\geq\frac{1-p_1}{4}\cdot\frac{\sqrt{d_{\min}}}{q^{3/2}}\Big(p_1-p_0\Big).
    \end{align*}
    By definition of $G$, we thus get
    \begin{align*}
        g(p_1)\Big(1-g(p_0)\Big)&\leq \frac{g(p_1)}{1-g(p_1)}\cdot\frac{1-g(p_0)}{g(p_0)}\\
        &=e^{G(p_1)-G(p_0)}\\
        &\leq e^{-\frac{1-p_1}{4}\cdot\frac{\sqrt{d_{\min}}}{q^{3/2}}(p_1-p_0)}.
    \end{align*}
\end{proof}

\section{Proof of Main Results}\label{sectionmainresult}
In this section, we use our results from Section \ref{sectionsharptransition} to prove Theorems \ref{capacitybridge} and \ref{mainresult}. We first prove a generalization of Theorem \ref{mainresult} below. Taking $\delta=\frac{4q^{3/2}}{(1-p)\sqrt{d_{\min}}}\ln(nL)$ in the following theorem gives Theorem \ref{mainresult}.

\begin{theorem}\label{mainresultformal}
    Let $C\subseteq \F_q^n$ be a linear, $(p,L)$-list decodable code with minimum distance $d_{\min}\geq 4q.$ Then for any $\delta>0$ and any $c\in C$, we have
    \begin{align*}
        \Pr_{z\sim p-n^{-\frac{1}{4}}-\delta}\Big[D^*(c+z)=c \Big]\geq 1-2Le^{-\frac{1-p}{4}\frac{\sqrt{d_{\min}}}{q^{3/2}}\delta}.
    \end{align*}
\end{theorem}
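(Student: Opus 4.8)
The plan is to read the desired inequality as a statement about the function $g(p') = \Pr_{z\sim p'}[z\in\Omega]$, where $\Omega$ is the decoding region of the specific symmetric maximum-likelihood decoder $D^*$ of \eqref{maxdecoderdefn}. By Claim~\ref{decodersymmetry}, $\Pr_{z\sim p'}[D^*(c+z)=c]$ equals $g(p')$ for every $c\in C$, so it suffices to prove $g\big(p - n^{-1/4} - \delta\big) \ge 1 - 2L\,e^{-\frac{1-p}{4}\frac{\sqrt{d_{\min}}}{q^{3/2}}\delta}$. I would combine two ingredients: (i) a crude lower bound $g(p - n^{-1/4}) \ge \tfrac1{2L}$ coming from list-decodability, and (ii) the sharp-transition estimate of Theorem~\ref{gbound}, which says that once $g$ is non-negligible at one noise level it is extremely close to $1$ at any slightly smaller noise level.

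For ingredient (i), I would introduce the randomized decoder $D$ that on input $m$ outputs a uniformly random element of $\{c\in C : d(m,c)\le pn\}$; by $(p,L)$-list-decodability this list has size at most $L$. If $z\sim p - n^{-1/4}$, then $\textnormal{wt}(z)$ is a sum of $n$ independent $\{0,1\}$ variables whose sum has mean $pn - n^{3/4}$, so Hoeffding's inequality (Lemma~\ref{hoeffding}) with deviation $n^{3/4}$ gives $\Pr[\textnormal{wt}(z) > pn] \le e^{-2\sqrt n} \le \tfrac12$. On the event $\textnormal{wt}(z)\le pn$ the transmitted codeword lies in the list, so $D$ succeeds with probability at least $1/L$; hence $\Pr_{z\sim p-n^{-1/4}}[D(c+z)=c] \ge \tfrac1{2L}$ for every $c$. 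Since $D^*$ is a symmetric maximum-likelihood decoder, Fact~\ref{bestdecoder} together with Claim~\ref{decodersymmetry} (which makes the success probability of $D^*$ independent of $c$ and equal to $g$) yields $g(p-n^{-1/4}) \ge \tfrac1{2L}$.

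For ingredient (ii), assuming $p - n^{-1/4} - \delta \ge 0$ (otherwise the claimed noise level is undefined and there is nothing to prove), I would apply Theorem~\ref{gbound} with $p_0 = p - n^{-1/4} - \delta \le p_1 = p - n^{-1/4}$; using $d_{\min}\ge 4q$ this gives
\[
  g(p_1)\big(1 - g(p_0)\big) \le e^{-\frac{1-p_1}{4}\frac{\sqrt{d_{\min}}}{q^{3/2}}(p_1 - p_0)} = e^{-\frac{1-p_1}{4}\frac{\sqrt{d_{\min}}}{q^{3/2}}\delta} \le e^{-\frac{1-p}{4}\frac{\sqrt{d_{\min}}}{q^{3/2}}\delta},
\]
where the last inequality uses $1 - p_1 = 1 - p + n^{-1/4} \ge 1 - p$. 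Dividing through by $g(p_1)\ge\tfrac1{2L}$ gives $1 - g(p_0) \le 2L\,e^{-\frac{1-p}{4}\frac{\sqrt{d_{\min}}}{q^{3/2}}\delta}$, and since $g(p_0) = \Pr_{z\sim p_0}[D^*(c+z)=c]$ for every $c\in C$ by Claim~\ref{decodersymmetry}, this is exactly the asserted bound.

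I do not expect a genuine obstacle: essentially all the substance lives in Theorem~\ref{gbound}, hence in the isoperimetric inequality of Section~\ref{sectionisoperimetry} and the $\Delta_\Omega\ge d_{\min}/q - 3$ estimate of Lemma~\ref{bounddelta}. The one place that needs care is ingredient (i): one must correctly pass from ``the list has size $\le L$'' to a per-codeword success probability of the concrete randomized decoder, then transfer this to $\Pr[z\in\Omega]$ using \emph{both} the optimality of symmetric maximum-likelihood decoding and the translation-symmetry of its decoding regions, and one must check that the slack $n^{-1/4}$ in the noise level is enough to keep $\textnormal{wt}(z)$ below $pn$ with probability $1-o(1)$ (here $1 - e^{-2\sqrt n}$).
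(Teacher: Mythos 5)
Your proposal is correct and matches the paper's proof essentially step for step: the same auxiliary randomized list-decoder $D$ together with Hoeffding's inequality to get $g(p-n^{-1/4})\ge\frac1{2L}$, optimality of the symmetric ML decoder (Fact~\ref{bestdecoder}) plus translation invariance (Claim~\ref{decodersymmetry}) to transfer this to $\Pr[z\in\Omega]$, and then Theorem~\ref{gbound} with $p_1=p-n^{-1/4}$, $p_0=p-n^{-1/4}-\delta$. The only difference is cosmetic: you make explicit the step $1-p_1\ge 1-p$ used to replace the exponent's $1-p_1$ by $1-p$, which the paper leaves implicit.
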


\begin{proof}
    Define the following decoder $D:\F_q^n\rightarrow C.$ Upon seeing a message $m\in\F_q^n$, the decoder $D$ finds all codewords $c\in C$ that satisfy wt$(m-c)\leq pn$, and outputs one of them uniformly at random. The probability of success of this decoder under errors of probability $p-n^{-\frac{1}{4}}$ is bounded by
\begin{align*}
    \Pr_{\substack{z\sim p-n^{-\frac{1}{4}}}}[D(c+z)=c]&\geq \Pr_{\substack{z\sim p-n^{-\frac{1}{4}}}}[\textnormal{wt}(z)\leq pn] \Pr_{\substack{z\sim p-n^{-\frac{1}{4}}}}[D(c+z)=c\big|\textnormal{wt}(z)\leq pn]\\
    &\geq (1-e^{-2\sqrt{n}})\cdot\frac{1}{L}\\
    &\geq \frac{1}{2L},
\end{align*}
where in the second inequality we used Hoeffding's inequality (Lemma \ref{hoeffding}) for the first term, and the fact that $C$ is $(p,L)$-decodable for the second term. Now by Fact \ref{bestdecoder}, the max-likelihood decoder $D^*$ can only have a better decoding probability than $D$, so we have
\begin{align}\label{bdg}
\Pr_{\substack{z\sim p-n^{-\frac{1}{4}}}}[D^*(c+z)=c]
    &\geq\frac{1}{2L}.
\end{align}  
    By Theorem \ref{gbound} and Claim \ref{decodersymmetry}, we then get
    \begin{align*}
        \Pr_{\substack{z\sim p-n^{-\frac{1}{4}}-\delta}}[D^*(c+z)=c]&\geq 1-2L e^{-\frac{1-p}{4}\cdot\frac{\sqrt{d_{\min}}}{q^{3/2}}\delta}.
    \end{align*}
\end{proof}

We then turn to proving Theorem \ref{capacitybridge} from Theorem \ref{mainresult}.

\newtheorem*{capacitybridge}{Theorem \ref{capacitybridge}}
\begin{capacitybridge}
    Let $\{C_n\subseteq\F_{q}^n\}$ be a family of linear codes with rate $1-h_q(p)$, and suppose $\{C_n\}$ achieves list-decoding capacity. If $d_{\min}(C_n)=\omega\Big(\frac{q^3}{(1-p)^2}\Big)$, then $\{C_n\}$ achieves capacity over the q-ary symmetric channel.
\end{capacitybridge}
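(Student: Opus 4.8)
The plan is to deduce Theorem~\ref{capacitybridge} directly from Theorem~\ref{mainresultformal} (equivalently from Theorem~\ref{mainresult}), the key move being to choose the list size as a function of the code family itself. Abbreviate $\gamma(n) := \frac{1-p}{4}\cdot\frac{\sqrt{d_{\min}(C_n)}}{q^{3/2}}$; the hypothesis $d_{\min}(C_n) = \omega\big(q^3/(1-p)^2\big)$ says exactly that $\gamma(n) = \omega(1)$. Here $\gamma(n)$ is the per-unit-$\delta$ exponential decay rate appearing in Theorem~\ref{mainresultformal}, so the strategy is to spend a vanishing amount $\delta = \delta(n)$ of extra noise headroom and still have the error term $2L(n)e^{-\gamma(n)\delta(n)}$ tend to zero, which is possible precisely when $\ln L(n) = o(\gamma(n))$.

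First I would fix the list size $L(n) := \lceil e^{\sqrt{\gamma(n)}}\rceil$. This is $\omega(1)$ because $\gamma(n) \to \infty$, and it satisfies $\ln L(n) = (1+o(1))\sqrt{\gamma(n)} = o(\gamma(n))$. By Definition~\ref{defnlistcapacity} there is then a function $\epsilon(n) = o(1)$ such that each $C_n$ is $(p - \epsilon(n), L(n))$-list-decodable; replacing $\epsilon(n)$ by $\max\{\epsilon(n), 0\}$, which only shrinks the decoding radius and hence preserves list-decodability, we may assume $\epsilon(n) \geq 0$. Since $d_{\min}(C_n) = \omega(q^3) \geq 4q$ for all large $n$, Theorem~\ref{mainresultformal} applies to $C_n$ with list-decoding radius $p - \epsilon(n)$ and list size $L(n)$.

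Next I would invoke Theorem~\ref{mainresultformal} with slack parameter $\delta := \delta(n) := 2/\sqrt{\gamma(n)} = o(1)$, and abbreviate $\epsilon'(n) := \epsilon(n) + n^{-1/4} + \delta(n) = o(1)$. Using $1 - (p - \epsilon(n)) \geq 1 - p$ in the exponent together with the choices of $L(n)$ and $\delta(n)$, this yields, for every $c \in C_n$,
\begin{align*}
\Pr_{z \sim p - \epsilon'(n)}\big[D^*(c+z) = c\big]
&\geq 1 - 2L(n)\,e^{-\frac{1-(p-\epsilon(n))}{4}\cdot\frac{\sqrt{d_{\min}(C_n)}}{q^{3/2}}\,\delta(n)} \\
&\geq 1 - 2L(n)\,e^{-\gamma(n)\delta(n)} = 1 - O\big(e^{-\sqrt{\gamma(n)}}\big) = 1 - o(1).
\end{align*}

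Finally I would match this to Definition~\ref{defncapacityqsc}. Write $\eta(n) := 2L(n)e^{-\gamma(n)\delta(n)} = o(1)$ for the error bound above and set $\epsilon_n := \max\{\epsilon'(n),\, \eta(n)\} = o(1)$. The decoding region $\Omega$ of $D^*$ is monotone decreasing, so by Claim~\ref{decodersymmetry} and Fact~\ref{strictlyincreasing} the quantity $\Pr_{z\sim t}[D^*(c+z)=c] = \Pr_{z\sim t}[z\in\Omega]$ is nonincreasing in $t$; hence lowering the noise from $p - \epsilon'(n)$ to $p - \epsilon_n$ only helps, giving $\Pr_{z\sim p - \epsilon_n}[D^*(c+z)=c] \geq 1 - \eta(n) \geq 1 - \epsilon_n$ for every $c \in C_n$. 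Since $\{C_n\}$ has rate $1 - h_q(p)$ by hypothesis, this is precisely the statement that $\{C_n\}$ achieves capacity on the qSC$_p$. I expect the only genuinely delicate point to be the first step: $d_{\min}(C_n)$, and hence $\gamma(n)$, may tend to infinity arbitrarily slowly, so no \emph{fixed} slowly-growing choice of $L(n)$ will do (it could fail $\ln L(n) = o(\gamma(n))$); the list size must be tailored to $\gamma(n)$, which is legitimate exactly because list-decoding capacity is required to hold for \emph{every} $L(n) = \omega(1)$.
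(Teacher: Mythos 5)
Your proof is correct and takes essentially the same route as the paper: both arguments choose a list size $L(n)\to\infty$ tailored to $d_{\min}(C_n)$, invoke Definition~\ref{defnlistcapacity} to get $(p-\epsilon(n),L(n))$-list-decodability for some $\epsilon(n)=o(1)$, and then apply Theorem~\ref{mainresultformal} with a slack $\delta(n)=o(1)$ chosen so that $2L(n)e^{-\gamma(n)\delta(n)}\to 0$. The only cosmetic difference is the specific parameter choice: writing $\gamma(n)=\frac{1-p}{4}\frac{\sqrt{d_{\min}}}{q^{3/2}}$, the paper takes $L=16\gamma^2$ and $\delta=\tfrac12\gamma^{-1/2}$ while you take $L=\lceil e^{\sqrt{\gamma}}\rceil$ and $\delta=2\gamma^{-1/2}$; both satisfy $\ln L=o(\gamma\delta)$ and hence both work. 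Your closing observation that $L(n)$ must be tailored to $\gamma(n)$ because $d_{\min}$ may grow arbitrarily slowly is accurate and is implicitly what the paper's choice also exploits.
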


\begin{proof}
    By Definition \ref{defnlistcapacity}, there exists a function $\epsilon(n)=o(1)$ such that each $C_n$ is $\Big(p-\epsilon_n,\frac{(1-p)^2d_{\min}}{q^3}\Big)$-list decodable. Applying Theorem \ref{mainresultformal} with $\delta = \Big(\frac{q^3}{(1-p)^2d_{\min}}\Big)^{\frac{1}{4}}=o(1)$, we then get
    \begin{align*}
        \Pr_{z\sim p-\epsilon_n-n^{-\frac{1}{4}}-\delta}\Big[D^*(c+z)=c \Big]&\geq 1-\frac{2(1-p)^2d_{\min}}{q^3}\cdot e^{-\frac{1}{4}\big(\frac{(1-p)^2d_{\min}}{q^3}\big)^\frac{1}{4}}\\
        &\geq 1-o(1).
    \end{align*}
\end{proof}

\section*{Acknowledgments}
We thank Hervé Chabanne, Sivakanth Gopi, Anup Rao and Gilles Z{\'e}mor for useful discussions. The work of Francisco Pernice was supported in part by an MIT Jacobs Presidential Fellowship. The work of Oscar Sprumont was supported in part by NSF CCF-2131899, NSF CCF-1813135 and Anna Karlin's Bill and Melinda Gates Endowed Chair. The work of Mary Wootters was supported in part by NSF CCF-2231157 and CCF-2133154.

\appendix
\section{Erasure Channel}\label{aerasure}
In this section, we recall and prove Theorem \ref{capacitythm}.

\newtheorem*{capacitythm}{Theorem \ref{capacitythm}}
\begin{capacitythm}
    Let $C\subseteq\Z_q^n$ be a $(p,L)$-list decodable code with minimum distance $ \omega(\log L).$ Then $C$ admits reliable communication on the qSC$_{p'}$ for $p'=p-\frac{\log n}{\sqrt{n}}.$
\end{capacitythm}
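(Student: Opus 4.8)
The plan is to analyze the obvious erasure decoder and bound its failure probability by $o(1)$, using list-decodability precisely to tame a union bound that would otherwise diverge; the argument uses neither linearity nor any bound on $q$, and the failure probability it yields is uniform in the transmitted codeword. Fix $c\in C$ and send it through the $q$-ary erasure channel $\mathrm{qEC}_{p'}$ with $p'=p-\frac{\log n}{\sqrt n}$: the received word agrees with $c$ off an erasure set $E\subseteq[n]$ and is ``$?$'' on $E$, where each coordinate lies in $E$ independently with probability $p'$. The decoder returns the unique codeword agreeing with the received word on $[n]\setminus E$ when there is one, and fails otherwise; so it suffices to show that, with probability $1-o(1)$ over $E$, there is no $c'\in C\setminus\{c\}$ whose difference support $D(c,c'):=\{i\in[n]:c'_i\neq c_i\}$ is contained in $E$. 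Note $|D(c,c')|=d(c,c')\geq d_{\min}$ for every $c'\neq c$.

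Split the failure event according to $|E|$. By Hoeffding's inequality (Lemma~\ref{hoeffding}), $\Pr[\,|E|>pn\,]\leq e^{-2(p-p')^2 n}=e^{-2\log^2 n}=o(1)$, so we may work on the event $|E|\leq pn$. On this event, any $c'\neq c$ with $D(c,c')\subseteq E$ has $d(c,c')=|D(c,c')|\leq|E|\leq pn$, i.e.\ $c'$ lies in the Hamming ball of radius $pn$ about $c$; since $C$ is $(p,L)$-list-decodable, that ball contains at most $L$ codewords, hence fewer than $L$ codewords other than $c$. This is the crux: it confines the ``dangerous'' codewords to a fixed set of size less than $L$, over which we may union bound, even though which codewords are dangerous is itself a function of the random set $E$.

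To conclude, for a fixed $c'\neq c$ we have $\Pr[D(c,c')\subseteq E]=(p')^{|D(c,c')|}=(p')^{d(c,c')}\leq(p')^{d_{\min}}$ since $p'<1$, so
\[
\Pr\big[\exists\,c'\in C\setminus\{c\}:\ D(c,c')\subseteq E \text{ and } |E|\leq pn\big]\ \leq \sum_{\substack{c'\neq c\\ d(c,c')\leq pn}}(p')^{d(c,c')}\ \leq\ L\,(p')^{d_{\min}}\ =\ e^{\ln L-d_{\min}\ln(1/p')}.
\]
Since $p\in(0,1)$ is fixed, $\ln(1/p')\geq\ln(1/p)-o(1)$ is bounded below by a positive constant for large $n$, and together with the hypothesis $d_{\min}=\omega(\log L)$ this forces $\ln L-d_{\min}\ln(1/p')\to-\infty$, so $L(p')^{d_{\min}}=o(1)$. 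Adding the two estimates, the decoder fails with probability $o(1)$ uniformly over $c$, which is reliable communication on $\mathrm{qEC}_{p'}$ with $p'=p-\frac{\log n}{\sqrt n}=p-o(1)$.

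The only genuine obstacle is the circular dependence highlighted above, which is also exactly why large minimum distance alone cannot suffice: a code may contain exponentially many codewords whose difference support lies inside the (typically $\approx pn$-sized) erasure region, so a blind union bound over all of $C$ diverges. List-decodability is what lets us first pin the relevant codewords inside a fixed radius-$pn$ ball and then union bound over only $O(L)$ of them; the rest is the routine Hoeffding tail bound and the elementary inequality $(p')^{d(c,c')}\leq(p')^{d_{\min}}$.
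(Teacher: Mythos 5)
Your proposal is correct and follows essentially the same route as the paper's proof: a Hoeffding bound on the erasure weight, then a union bound over the at most $L$ codewords within distance $pn$ of $c$ (pinned down by list-decodability), each surviving with probability at most $(p')^{d_{\min}}=o(1/L)$. Your write-up is in fact slightly more careful about the exponent (the paper's displayed bound $p^{-\omega(\log nL)}$ contains a sign/argument typo) and about why the union bound over an $E$-dependent set of codewords is legitimate.
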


\begin{proof}
Fix any arbitrary sent codeword $c\in C$. For any erasure pattern $z\in \{0,1\}^n$, we define the set of codewords that could be mistaken for $c$ as
    \begin{align*}
        S(z):=\Big\{c'\in C:\restr{c}{\{i\in[n]:z_i=0\}}= \restr{c'}{\{i\in[n]:z_i=0\}}\Big\}.
    \end{align*}
    Our goal will be to show that with high probability over the choice of $z$, $c$ is the only element in $S(z).$ We first note that by Hoeffding's inequality (Lemma \ref{hoeffding}), we have
    \begin{align}\label{farunlikely}
        \Pr_{z\sim p-\frac{\log n}{\sqrt{n}}}[\textnormal{wt}(z)>pn]&<e^{-2\log^2n}\nonumber\\
        &=o(1).
    \end{align}
    
    We also note that by our assumption on the minimal distance of $C$, the probability that any $c'\in C$ be in $S(z)$ can be bounded by 
    \begin{align}\label{probeach}
        \Pr_{z\sim p-\frac{\log n}{\sqrt{n}}}[c'\in S(z)]&=p^{\textnormal{wt}(c+c')}\nonumber\\
        &\leq p^{-\omega(\log nL)}\nonumber\\
        &\leq o(\frac{1}{L}).
    \end{align}
    But since $C$ is $(p,L)$-list decodable, there are at most $L$ codewords $c'\in C$ satisfying wt$(c+c')\leq pn.$ Combining equations (\ref{farunlikely}) and \ref{probeach}) and applying the union bound, we thus get
    \begin{align*}
        \Pr_{z\sim p-\frac{\log n}{\sqrt{n}}}\big[|S(z)|>1\big]&\leq \Pr_{z\sim p-\frac{\log n}{\sqrt{n}}}[\textnormal{wt}(z)>pn]+\sum_{c'\in C:\textnormal{wt}(c+c')\leq pn}\Pr_{z\sim p-\frac{\log n}{\sqrt{n}}}[c'\in S(z)]\\
       &\leq o(1)+L\cdot o(\frac{1}{L})\nonumber\\
       &\leq o(1).
    \end{align*}
\end{proof}

\section{Necessity of distance condition}\label{app:distance}

We construct a linear code $C\subseteq \F_2^n$ of small distance which achieves list-decoding capacity but not q-SC$_p$ capacity. We start with a linear code $C'\subseteq \F_2^n$ which is list-decoding capacity achieving with list size $L' = L_n'.$ Then there is $\eps' = \eps_n'=o(1)$ so that $C'$ is $(p-\eps_n', L_n')$-list decodable. Consider the code $C = \vspan\{e_1, C'\}$, where $(e_1)_i = 1$ if $i=1$ and $(e_1)_i = 0$ otherwise. Fix a $z \in \F_2^n$, and suppose $C'\cap B_{(p-2\eps')n + 1}(z) = \{c_1,\dots, c_t\}$. Since $(p-2\eps')n + 1 < (p-\eps')n$ for all $n$ large enough, we have $t\leq L$ for all $n$ large enough. But then $C\cap B_{(p-2\eps')n} (z) \subseteq \{c_1,\dots, c_t, c_1+e_1, \dots, c_t + e_1\}.$ Hence, setting $\eps = 2\eps'$ and $L = 2L'$, we conclude that $C$ also achieves list-decoding capacity.

However, it's clear that $C$ cannot achieve q-SC capacity. Indeed, if we send some $c\in C$ and the first bit gets corrupted, we can no longer distinguish between $c$ and $c+e_1.$ In other words, the probability of decoding error is bounded below by approximately $p.$

\section{Discussion of the Proofs in \cite{kindarji2010generalization}}\label{appendixmistake}
In this section, we discuss the proof of \cite{kindarji2010generalization} for the claim that the decoding success probability of any $q$-ary linear code with large minimum distance transitions rapidly from $1-o(1)$ to $o(1).$ As far as we can tell, the following two issues suggest that their proof may not be complete. We thank Hervé Chabanne for useful discussions on this subject.

The first issue is that the arguments of \cite{kindarji2010generalization} rely on the following definition of monotonicity: a function $f:\F_q^n\rightarrow\{0,1\}$ is deemed \emph{monotone} if whenever the support of $x\in\F_q^n$ is a subset of the support of $y\in\F_q^n$ and $f(x)=1$, then we must also have $f(y)=1.$ This definition allows for an easier adaptation of $\F_2$ isoperimetric inequalities, and the authors of \cite{kindarji2010generalization} prove sharp thresholds results for functions that are monotone in this sense. They then claim that the non-decoding region of a linear code $C\subseteq\F_q^n$ satisfies this version of monotonicity. Unfortunately, this is not true. For instance, consider the case where $q=3$ and our code is the span of the all-$1$ vector. Then the error string $x$ that has a $1$ in the first $\frac{n}{2}+1$ coordinates and $0$ everywhere else leads to a decoding failure, while the error string $y$ that has a $1$ in the first $\frac{n}{4}+1$ coordinates, a $2$ in the next $\frac{n}{4}+1$ coordinates, and a $0$ everywhere else leads to a decoding success.

The second issue in \cite{kindarji2010generalization} has to do with the neighborhood of boundary points. A critical part of their argument is their claim that any vector that is in the non-decoding region $U_0:=\big\{x\in\F_q^n:\exists c\in C\setminus\{0\}:d(x,c)\leq d(x,0)\big\}$ and has at least one neighbor outside of $U_0$ must have at least $\frac{d}{2}$ such neighbors, where $d$ is the minimum distance of the code. This is true for $q=2$, but it does not hold for larger alphabets. For example, suppose we are working with a field $\F_q$ for some $q\geq n$, and suppose our code is again the span of the all-$1$ vector. Now suppose that the error string $x$ is $x=(0,1,1,2,3,4,\dotsc,n-3,n-2)$. This vector is in the non-decoding region $U_0$, as it is closer to the vector $(1,1,1,...,1)$ than to the 0 vector. It also has a neighbor outside of $U_0$, for instance the vector $(0,0,1,2,3,4,5,...,n-4,n-3,n-2)$. However it only has $2$ such neighbors: the ones obtained by setting the second coordinate or the third coordinate of $x$ to 0. Setting any other coordinate to 0 would yield a vector that is still in $U_0$, as it is equally far from the 0 vector and the all-1 vector.

\bibliographystyle{alpha}
\bibliography{listcapacity}

\newcommand{\etalchar}[1]{$^{#1}$}
\begin{thebibliography}{KKM{\etalchar{+}}16}

\bibitem[AGL24]{Alrabiah2024puncturedRS3}
Omar Alrabiah, Venkatesan Guruswami, and Ray Li.
\newblock Randomly punctured reed-solomon codes achieve list-decoding capacity
  over linear-sized fields.
\newblock In Bojan Mohar, Igor Shinkar, and Ryan O'Donnell, editors, {\em
  Proceedings of the 56th Annual {ACM} Symposium on Theory of Computing, {STOC}
  2024, Vancouver, BC, Canada, June 24-28, 2024}, pages 1458--1469. {ACM},
  2024.

\bibitem[Ari09]{arikan2009polar}
Erdal Arikan.
\newblock Channel polarization: a method for constructing capacity-achieving
  codes for symmetric binary-input memoryless channels.
\newblock {\em {IEEE} Trans. Inf. Theory}, 55(7):3051--3073, 2009.

\bibitem[AS23]{abbe2023rmcapacityBSC}
Emmanuel Abbe and Colin Sandon.
\newblock A proof that reed-muller codes achieve shannon capacity on symmetric
  channels.
\newblock In {\em 64th {IEEE} Annual Symposium on Foundations of Computer
  Science, {FOCS} 2023, Santa Cruz, CA, USA, November 6-9, 2023}, pages
  177--193. {IEEE}, 2023.

\bibitem[ASSY23]{abbe2023survey2}
Emmanuel Abbe, Ori Sberlo, Amir Shpilka, and Min Ye.
\newblock Reed-muller codes.
\newblock {\em Foundations and Trends in Communications and Information
  Theory}, 20(1–2):1--156, 2023.

\bibitem[ASW15]{abbe2015RMlowrate}
Emmanuel Abbe, Amir Shpilka, and Avi Wigderson.
\newblock Reed-muller codes for random erasures and errors.
\newblock {\em {IEEE} Trans. Inf. Theory}, 61(10):5229--5252, 2015.

\bibitem[BDGZ23]{Brakensiek2023AGcodes5}
Joshua Brakensiek, Manik Dhar, Sivakanth Gopi, and Zihan Zhang.
\newblock {AG} codes achieve list decoding capacity over contant-sized fields.
\newblock {\em CoRR}, abs/2310.12898, 2023.

\bibitem[BGM23]{Brakensiek2023puncturedRS1}
Joshua Brakensiek, Sivakanth Gopi, and Visu Makam.
\newblock Generic reed-solomon codes achieve list-decoding capacity.
\newblock In Barna Saha and Rocco~A. Servedio, editors, {\em Proceedings of the
  55th Annual {ACM} Symposium on Theory of Computing, {STOC} 2023, Orlando, FL,
  USA, June 20-23, 2023}, pages 1488--1501. {ACM}, 2023.

\bibitem[BLM13]{concentrationinequalities}
St{\'{e}}phane Boucheron, G{\'{a}}bor Lugosi, and Pascal Massart.
\newblock {\em Concentration Inequalities - {A} Nonasymptotic Theory of
  Independence}.
\newblock Oxford University Press, 2013.

\bibitem[BST24]{Berman2024puncturedRS4}
Amit Berman, Yaron Shany, and Itzhak Tamo.
\newblock Explicit subcodes of reed-solomon codes that efficiently achieve list
  decoding capacity.
\newblock {\em CoRR}, abs/2401.15034, 2024.

\bibitem[CZ24]{CZ24}
Yeyuan Chen and Zihan Zhang.
\newblock Explicit folded reed-solomon and multiplicity codes achieve relaxed
  generalized singleton bound.
\newblock arXiv preprint, 2024, 2024.

\bibitem[Eli57]{elias1957firstlist}
Peter Elias.
\newblock List decoding for noisy channels.
\newblock {\em Wescon Convention Record, Part 2}, pages 94--104, 1957.

\bibitem[For66]{forney1966concatenated}
George~D. Forney.
\newblock Concatenated codes.
\newblock {\em MIT Press}, 1966.

\bibitem[Gal62]{gallager1962ldpc}
Robert~G. Gallager.
\newblock Low-density parity-check codes.
\newblock {\em {IRE} Trans. Inf. Theory}, 8(1):21--28, 1962.

\bibitem[GHK10]{Guruswami2010randomlinear1}
Venkatesan Guruswami, Johan H{\aa}stad, and Swastik Kopparty.
\newblock On the list-decodability of random linear codes.
\newblock In Leonard~J. Schulman, editor, {\em Proceedings of the 42nd {ACM}
  Symposium on Theory of Computing, {STOC} 2010, Cambridge, Massachusetts, USA,
  5-8 June 2010}, pages 409--416. {ACM}, 2010.

\bibitem[GLM{\etalchar{+}}22]{Guruswami2022randomlinear4}
Venkatesan Guruswami, Ray Li, Jonathan Mosheiff, Nicolas Resch, Shashwat Silas,
  and Mary Wootters.
\newblock Bounds for list-decoding and list-recovery of random linear codes.
\newblock {\em {IEEE} Trans. Inf. Theory}, 68(2):923--939, 2022.

\bibitem[GR08]{Guruswami2008FoldedRS}
Venkatesan Guruswami and Atri Rudra.
\newblock Explicit codes achieving list decoding capacity: Error-correction
  with optimal redundancy.
\newblock {\em {IEEE} Trans. Inf. Theory}, 54(1):135--150, 2008.

\bibitem[GR10]{GRconcat}
Venkatesan Guruswami and Atri Rudra.
\newblock The existence of concatenated codes list-decodable up to the hamming
  bound.
\newblock {\em IEEE transactions on information theory}, 56(10):5195--5206,
  2010.

\bibitem[GR22]{Guo2022AGcodes3}
Zeyu Guo and Noga Ron{-}Zewi.
\newblock Efficient list-decoding with constant alphabet and list sizes.
\newblock {\em {IEEE} Trans. Inf. Theory}, 68(3):1663--1682, 2022.

\bibitem[GRS23]{2023bookcodingtheory}
Venkatesan Guruswami, Atri Rudra, and Madhu Sudan.
\newblock Essential coding theory.
\newblock
  \url{https://cse.buffalo.edu/faculty/atri/courses/coding-theory/book/web-coding-book.pdf},
  2023.

\bibitem[GST21]{goldberg2024ListToDistance}
Eitan Goldberg, Chong Shangguan, and Itzhak Tamo.
\newblock Singleton-type bounds for list-decoding and list-recovery, and
  related results.
\newblock {\em CoRR}, abs/2112.05592, 2021.

\bibitem[GW13]{Guruswami2013multiplicitycodes1}
Venkatesan Guruswami and Carol Wang.
\newblock Linear-algebraic list decoding for variants of reed-solomon codes.
\newblock {\em {IEEE} Trans. Inf. Theory}, 59(6):3257--3268, 2013.

\bibitem[GX13]{Guruswami2013AGcodes1}
Venkatesan Guruswami and Chaoping Xing.
\newblock List decoding reed-solomon, algebraic-geometric, and gabidulin
  subcodes up to the singleton bound.
\newblock In Dan Boneh, Tim Roughgarden, and Joan Feigenbaum, editors, {\em
  Symposium on Theory of Computing Conference, STOC'13, Palo Alto, CA, USA,
  June 1-4, 2013}, pages 843--852. {ACM}, 2013.

\bibitem[GX22]{Guruswami2022AGcodes4}
Venkatesan Guruswami and Chaoping Xing.
\newblock Optimal rate list decoding over bounded alphabets using
  algebraic-geometric codes.
\newblock {\em J. {ACM}}, 69(2):10:1--10:48, 2022.

\bibitem[GZ23]{Guo2023puncturedRS2}
Zeyu Guo and Zihan Zhang.
\newblock Randomly punctured reed-solomon codes achieve the list decoding
  capacity over polynomial-size alphabets.
\newblock In {\em 64th {IEEE} Annual Symposium on Foundations of Computer
  Science, {FOCS} 2023, Santa Cruz, CA, USA, November 6-9, 2023}, pages
  164--176. {IEEE}, 2023.

\bibitem[HRW17]{Hemenway2017AGcodes2}
Brett Hemenway, Noga Ron{-}Zewi, and Mary Wootters.
\newblock Local list recovery of high-rate tensor codes {\&} applications.
\newblock In Chris Umans, editor, {\em 58th {IEEE} Annual Symposium on
  Foundations of Computer Science, {FOCS} 2017, Berkeley, CA, USA, October
  15-17, 2017}, pages 204--215. {IEEE} Computer Society, 2017.

\bibitem[KCC10]{kindarji2010generalization}
Bruno Kindarji, G{\'{e}}rard~D. Cohen, and Herv{\'{e}} Chabanne.
\newblock On the threshold of maximum-distance separable codes.
\newblock In {\em {IEEE} International Symposium on Information Theory, {ISIT}
  2010, June 13-18, 2010, Austin, Texas, USA, Proceedings}, pages 1163--1167.
  {IEEE}, 2010.

\bibitem[KCP16]{kumar2016applications4}
Santhosh Kumar, A.~Robert Calderbank, and Henry~D. Pfister.
\newblock Beyond double transitivity: Capacity-achieving cyclic codes on
  erasure channels.
\newblock In {\em 2016 {IEEE} Information Theory Workshop, {ITW} 2016,
  Cambridge, United Kingdom, September 11-14, 2016}, pages 241--245. {IEEE},
  2016.

\bibitem[KKM{\etalchar{+}}16]{kudekar2016erasure}
Shrinivas Kudekar, Santhosh Kumar, Marco Mondelli, Henry~D. Pfister, Eren
  Sasoglu, and R{\"{u}}diger~L. Urbanke.
\newblock Reed-muller codes achieve capacity on erasure channels.
\newblock In Daniel Wichs and Yishay Mansour, editors, {\em Proceedings of the
  48th Annual {ACM} {SIGACT} Symposium on Theory of Computing, {STOC} 2016,
  Cambridge, MA, USA, June 18-21, 2016}, pages 658--669. {ACM}, 2016.

\bibitem[Kop15]{Kopparty2015multiplicitycodes2}
Swastik Kopparty.
\newblock List-decoding multiplicity codes.
\newblock {\em Theory Comput.}, 11:149--182, 2015.

\bibitem[KRSW18]{Kopparty2018multiplicitycodes3}
Swastik Kopparty, Noga Ron{-}Zewi, Shubhangi Saraf, and Mary Wootters.
\newblock Improved decoding of folded reed-solomon and multiplicity codes.
\newblock In Mikkel Thorup, editor, {\em 59th {IEEE} Annual Symposium on
  Foundations of Computer Science, {FOCS} 2018, Paris, France, October 7-9,
  2018}, pages 212--223. {IEEE} Computer Society, 2018.

\bibitem[KRU13]{kudekar2013ldpc}
Shrinivas Kudekar, Tom Richardson, and R{\"{u}}diger~L. Urbanke.
\newblock Spatially coupled ensembles universally achieve capacity under belief
  propagation.
\newblock {\em {IEEE} Trans. Inf. Theory}, 59(12):7761--7813, 2013.

\bibitem[LMS{\etalchar{+}}97]{luby1997ldpc2}
Michael Luby, Michael Mitzenmacher, Mohammad~Amin Shokrollahi, Daniel~A.
  Spielman, and Volker Stemann.
\newblock Practical loss-resilient codes.
\newblock In Frank~Thomson Leighton and Peter~W. Shor, editors, {\em
  Proceedings of the Twenty-Ninth Annual {ACM} Symposium on the Theory of
  Computing, El Paso, Texas, USA, May 4-6, 1997}, pages 150--159. {ACM}, 1997.

\bibitem[LW21]{Li2021randomlinear3}
Ray Li and Mary Wootters.
\newblock Improved list-decodability of random linear binary codes.
\newblock {\em {IEEE} Trans. Inf. Theory}, 67(3):1522--1536, 2021.

\bibitem[Mar74]{margulis1974transition}
Grigory~A. Margulis.
\newblock Probabilistic characteristics of graphs with large connectivity.
\newblock {\em Problems of Information Transmission}, 10(2):101--108, 1974.

\bibitem[MRR{\etalchar{+}}20]{Mosheiff2020ldpccapacity}
Jonathan Mosheiff, Nicolas Resch, Noga Ron{-}Zewi, Shashwat Silas, and Mary
  Wootters.
\newblock {LDPC} codes achieve list decoding capacity.
\newblock In {\em 61st {IEEE} Annual Symposium on Foundations of Computer
  Science, {FOCS} 2020, Durham, NC, USA, November 16-19, 2020}, pages 458--469.
  {IEEE}, 2020.

\bibitem[MS77]{1977bookkrawtchouk}
Florence MacWilliams and Neil Sloane.
\newblock {\em The theory of error correcting codes}.
\newblock North-Holland Publishing Company, 1977.

\bibitem[RP24]{reeves2021bitcapacity}
Galen Reeves and Henry~D. Pfister.
\newblock Reed-muller codes on {BMS} channels achieve vanishing bit-error
  probability for all rates below capacity.
\newblock {\em {IEEE} Trans. Inf. Theory}, 70(2):920--949, 2024.

\bibitem[RU10]{RU10}
Atri Rudra and Steve Uurtamo.
\newblock Two theorems on list decoding.
\newblock In {\em Approximation, Randomization, and Combinatorial Optimization.
  Algorithms and Techniques: 13th International Workshop, APPROX 2010, and 14th
  International Workshop, RANDOM 2010, Barcelona, Spain, September 1-3, 2010.
  Proceedings}, pages 696--709. Springer, 2010.

\bibitem[Rus82]{russo1982transition}
Lucio Russo.
\newblock An approximate zero-one law.
\newblock {\em Probability Theory and Related Fields}, 61(1):129--139, 1982.

\bibitem[Sha48]{shannon1948entropy}
Claude~E. Shannon.
\newblock A mathematical theory of communication.
\newblock {\em Bell Syst. Tech. J.}, 27(3):379--423, 1948.

\bibitem[Sri24]{S24}
Shashank Srivastava.
\newblock Improved list size for folded reed-solomon codes.
\newblock {\em arXiv preprint arXiv:2410.09031}, 2024.

\bibitem[Tal93]{Talagrand93}
Michel Talagrand.
\newblock Isoperimetry, logarithmic sobolev inequalities on the discrete cube,
  and margulis' graph connectivity theorem.
\newblock {\em Geometric and Functional Analysis}, 3(3):295--314, 1993.

\bibitem[Tam24]{tamo2024tighterlists}
Itzhak Tamo.
\newblock Tighter list-size bounds for list-decoding and recovery of folded
  reed-solomon and multiplicity codes.
\newblock {\em IEEE Transactions on Information Theory}, pages 1--1, 2024.

\bibitem[TZ00]{tillich2000applications2}
Jean{-}Pierre Tillich and Gilles Z{\'{e}}mor.
\newblock Discrete isoperimetric inequalities and the probability of a decoding
  error.
\newblock {\em Comb. Probab. Comput.}, 9(5):465--479, 2000.

\bibitem[TZ04]{tillich2004applications3}
Jean{-}Pierre Tillich and Gilles Z{\'{e}}mor.
\newblock The gaussian isoperimetric inequality and decoding error
  probabilities for the gaussian channel.
\newblock {\em {IEEE} Trans. Inf. Theory}, 50(2):328--331, 2004.

\bibitem[Woo13]{wootters2013randomlinear2}
Mary Wootters.
\newblock On the list decodability of random linear codes with large error
  rates.
\newblock In Dan Boneh, Tim Roughgarden, and Joan Feigenbaum, editors, {\em
  Symposium on Theory of Computing Conference, STOC'13, Palo Alto, CA, USA,
  June 1-4, 2013}, pages 853--860. {ACM}, 2013.

\bibitem[Woz58]{wozencraft1958firstlist}
John~M. Wozencraft.
\newblock List decoding.
\newblock {\em Quarterly Progress Report, Research Laboratory of Electronics},
  pages 90--95, 1958.

\bibitem[Z{\'{e}}m93]{zemor93application}
Gilles Z{\'{e}}mor.
\newblock Threshold effects in codes.
\newblock In G{\'{e}}rard~D. Cohen, Simon Litsyn, Antoine Lobstein, and Gilles
  Z{\'{e}}mor, editors, {\em Algebraic Coding, First French-Israeli Workshop,
  Paris, France, July 19-21, 1993, Proceedings}, volume 781 of {\em Lecture
  Notes in Computer Science}, pages 278--286. Springer, 1993.

\bibitem[ZP81]{zyablov1981list}
Victor~Vasilievich Zyablov and Mark~Semenovich Pinsker.
\newblock List concatenated decoding.
\newblock {\em Problemy Peredachi Informatsii}, 17(4):29--33, 1981.

\end{thebibliography}
\end{document}